\renewcommand\section{\@startsection {section}{1}{\z@}%
                                   {-3.5ex \@plus -1ex \@minus -.2ex}
                                   {2.3ex \@plus.2ex}%
                                   {\normalfont\large\bfseries}}
\renewcommand\subsection{\@startsection{subsection}{2}{\z@}%
                                     {-3.25ex\@plus -1ex \@minus -.2ex}%
                                     {1.5ex \@plus .2ex}%
                                     {\normalfont\bfseries}}
\newcommand{\bea}{\begin{eqnarray}}
\newcommand{\eea}{\end{eqnarray}}
\newcommand{\be}{\begin{equation}}
\newcommand{\ee}{\end{equation}}
\newcommand{\sgn}{\mathrm{sgn}}
\newcommand{\noi}{\noindent}
\newcommand{\non}{\nonumber}
\newcommand{\Tr}{\mathrm{Tr}}
\newcommand{\im}{\mathrm{Im}}
\newcommand{\re}{\mathrm{Re}}
\newcommand{\st}{$S_\mathrm{t}^1\,\,$}
\newcommand{\sm}{$S_\mathrm{M}^1$}
\newcommand{\CN}{\mathcal{N}}
\newcommand{\CP}{\mathcal{P}}
\newcommand{\half}{\textstyle{\frac{1}{2}}}
\theoremstyle{definition}
\newtheorem{definition}{Definition}
\newtheorem{proposition}{Proposition}
\begin{document}

\begin{titlepage}

\begin{center}

\hfill RUNHETC-2009-04

\vskip 2 cm {\Large \bf Stability and duality in
  $\mathcal{N}=2$ supergravity} \vskip 1.25 cm
{Jan Manschot}\\
{\vskip 0.5cm  NHETC, Rutgers University \\ Piscataway, NJ 08854-8019 USA}

\pagestyle{plain}
\end{center}

\vskip 2 cm

\begin{abstract}
\baselineskip=18pt
\noi 
The BPS-spectrum is known to change when moduli cross a wall of marginal
stability. This paper tests the compatibility of
wall-crossing with $S$-duality and electric-magnetic duality for   
$\mathcal{N}=2$ supergravity. To this end, the BPS-spectrum of
D4-D2-D0 branes is analyzed in the large volume limit of Calabi-Yau
moduli space. Partition functions are presented, which capture the 
stability of BPS-states corresponding to two constituents with primitive
charges and supported on very ample divisors in a compact Calabi-Yau. 
These functions are ``mock modular invariant'' and therefore confirm
$S$-duality. Furthermore, wall-crossing preserves electric-magnetic
duality, but is shown to break the ``spectral flow" symmetry of the
$\mathcal{N}=(4,0)$ CFT, which captures the degrees of freedom of a single
constituent. 
\end{abstract}
 
\end{titlepage}

\baselineskip=19pt

\tableofcontents

\vspace{0.5cm}
\section{Introduction}
The study of BPS-states in physics has been very
fruitful. Their invariance under (part of the)
supersymmetry transformations of a theory makes them insensitive to
variations of certain parameters. This allows the calculation of some
quantities in a different regime than the regime of
interest. BPS-states have been specifically useful in testing various
dualities, for example $S$-duality in 
$\CN=4$ Yang-Mills theory \cite{Vafa:1994tf} or in string theory 
\cite{Sen:1994fa}. Another major application is the 
understanding of the spectrum of supersymmetric
theories of gravity, leading to the microscopic
account of black hole entropy for various supersymmetric black holes
in string theory \cite{Strominger:1996sh, Maldacena:1997de}. 

This article considers the BPS-spectrum of $\CN=2$ supergravity
theories in 4 dimensions. $\CN=2$ supersymmetry is the least amount of
supersymmetry, which allows massive states to be BPS. It appears in string theory by
compactifying the 10-dimensional space-time on a compact 6-dimensional
Calabi-Yau manifold $X$. A large class of BPS-states are formed by
wrapping D-branes around cycles of $X$, which might correspond to
black hole states if the number of D-branes is sufficiently large. The
Witten index $\Omega$ (degeneracy counted with $(-1)^F$) is 
insensitive to perturbations of the string coupling constant 
$g_\mathrm{s}$, and plays therefore a central role in this
paper. It allows to show for certain cases that the magnitude of the index agrees with black hole entropy:
$\log \Omega \sim S_\mathrm{BH} $. The
study of D-branes on $X$ revealed many connections to objects in
mathematics, like vector bundles, coherent sheaves and derived
categories, which helps to understand their nature, see for a review
Ref. \cite{Aspinwall:2004jr}. The index $\Omega$ corresponds from this  
perspective to the Euler number $\chi(\mathcal{M})$ of their
moduli space $\mathcal{M}$ \cite{Vafa:1994tf}, or an analogous but better defined 
invariant like Donaldson-Thomas invariants \cite{Thomas:1998}.  
 
An intriguing aspect of BPS-states is their behavior as a function of
the moduli of the theory. The moduli parametrize the Calabi-Yau $X$
and appear in supergravity as scalar fields. Under variations of the moduli,
conservation laws allow BPS-states to become stable or unstable at
codimension 1 subspaces (walls) of the moduli space. Such changes in
the spectrum indeed occur, and were first observed in 4 dimensions by Seiberg and
Witten \cite{Seiberg:1994rs}. Denef \cite{Denef:2000nb} has given an 
illuminating picture of stability in supergravity as multi black hole
solutions whose relative distances depend on the value of the moduli
at infinity. At a wall,
these distances might diverge or become positive and finite.
The changes in the degeneracies $\Delta \Omega$ at a wall
show the impact on the spectrum of these
processes. Ref. \cite{Denef:2007vg} derives formulas for $\Delta
\Omega$ for $n$-body semi-primitive decay using arguments from
supergravity. The notion of stability for D-branes is
closely related to the notion of stability in mathematics
\cite{Douglas:2000ah,Douglas:2000gi}. In this context,
Kontsevich and Soibelman \cite{Kontsevich:2008} derive a very general wall-crossing formula for 
(generalized) Donaldson-Thomas invariants. Gaiotto {\it et al.} \cite{Gaiotto:2008cd}
shows that this generic formula applied to the indices
of 4-dimensional $\CN=2$ quantum field theory, is implied by
properties of the field theory. 

Much evidence exists for the presence of an $S$-duality and electric-magnetic duality
group in $\CN=2$ supergravity \cite{Bohm:1999uk,de
  Wit:1996ix}. $S$-duality is an $SL(2,\mathbb{Z})$ group which
exchanges weak and strong coupling; electric-magnetic duality
 is the action of a symplectic group on the vector multiplets. 
These dualities impose strong constraints on the spectrum of the
theory. The wall-crossing formulas are very generic on the other hand,
and the walls form a very intricate web in the moduli space. It is
therefore appropriate to ask: {\it are wall-crossing and duality 
  compatible with each other?} This paper analyses this question, concentrating on
D4-D2-D0 BPS-states or M-theory black holes, in the large volume limit
of Calabi-Yau moduli space. The BPS-objects correspond in this limit
to coherent sheaves on a Calabi-Yau 3-fold supported on an ample
divisor. The analysis considers the walls, the primitive wall-crossing formula
and (part of) the supergravity partition function
$\mathcal{Z}_\mathrm{sugra}(\tau,C,t)$, which enumerates the indices as
a function of D2- and D0-brane charges for fixed D4-brane
charge. $\mathcal{Z}_\mathrm{sugra}(\tau,C,t)$ captures the changes in the
spectrum by wall-crossing. $S$-duality predicts modularity for this function, which is
tested in this paper.

The degrees of freedom of a single D4-D2-D0 black hole are
related via M-theory to a 2-dimensional $\CN=(4,0)$ superconformal field theory
(SCFT) \cite{Maldacena:1997de}.  One of the symmetries of the SCFT spectrum is
the ``spectral flow symmetry'' \cite{deBoer:2006vg,
  Gaiotto:2006wm, Kraus:2006nb}, which are certain transformations of
the charges, which do not change the value of the moduli at
infinity. This imposes additional constraints on the spectrum to the ones
imposed by the supergravity duality groups. A single constituent cannot decay any
further, and conjectures by \cite{deBoer:2008fk, Andriyash:2008it}
indicate that the SCFT description of the spectrum (for given charge)
might only be valid for a specific value of the moduli. Therefore,
interesting dependence of the SCFT spectrum as a function of the
moduli at infinity is not expected. This suggests that a natural
decomposition for the supergravity partition function with fixed magnetic charge $P$ might be  
\be
\label{eq:zcftpzwc}
\mathcal{Z}_\mathrm{sugra}(\tau,C,t)=\mathcal{Z}_\mathrm{CFT}(\tau,C,t)+\mathcal{Z}_\mathrm{wc}(\tau,C,t),
\ee
where  $\mathcal{Z}_\mathrm{CFT}(\tau,C,t)$ is the well-studied SCFT
elliptic genus \cite{deBoer:2006vg, Gaiotto:2006wm, Kraus:2006nb,
  Manschot:2008zb}, and all wall-crossing in the moduli space is captured
by $\mathcal{Z}_\mathrm{wc}(\tau,C,t)$. $\mathcal{Z}_\mathrm{CFT}(\tau,C,t)$
is known to transform as a modular form from arguments of CFT; the modular
properties of $\mathcal{Z}_\mathrm{wc}(\tau,C,t)$ are however unknown. 

This paper considers a small part of $\mathcal{Z}_\mathrm{wc}(\tau,C,t)$, namely
$\sum_{P_1+P_2=P \atop \mathrm{ample,\, primitive}}\mathcal{Z}_{P_1\leftrightarrow P_2}(\tau,C,t)$, which enumerates the
indices of composite BPS-configurations with two constituents, with ample and primitive
magnetic charges $P_1$ and $P_2$. An important building block
of these functions is the newly introduced ``mock Siegel-Narain theta
function''. Mock modular forms do not transform exactly as modular
forms, but can be made so by the addition of a relatively simple
correction term \cite{Zwegers:2000}, which is applied to mock Siegel-Narain
theta functions in the appendix. Using its transformation properties,
one can show that the corrected partition function transforms
precisely as the SCFT elliptic genus, thereby confirming $S$-duality. 

From the analysis follows also that electric-magnetic duality remains present in
the theory, but the ``spectral flow'' symmetry of the SCFT is
generically not present. This is not quite unexpected since this is not a symmetry of
supergravity. Another indication that the spectral flow 
symmetry is not present appears in
Ref. \cite{Andriyash:2008it}, which explains that the jump in the D4-D2-D0 index by
wall-crossing can be larger than the index of a single BPS-object
(this effect is known as the entropy enigma \cite{Denef:2007vg}). 

A special property of $\mathcal{Z}_{P_1\leftrightarrow P_2}(\tau,C,t)$ is that
it does not contribute to the index if the moduli are chosen at the
corresponding attractor point.  However,
$\mathcal{Z}_{P_1\leftrightarrow P_2}(\tau,C,t)$ is generically not
zero, and therefore $\mathcal{Z}_\mathrm{sugra}(\tau,C,t)$ is nowhere
equal to $\mathcal{Z}_\mathrm{CFT}(\tau,C,t)$ generically. Section
\ref{sec:infradius} explains how these observations are in agreement
with conjectures of Refs. \cite{deBoer:2008fk, Andriyash:2008it} about
the uplift of these BPS-configurations to five dimensions.

Although the compatibility with the dualities is expected, it is very
interesting to see how it is realized. The stability condition and primitive
wall-crossing formula combine in an almost miraculous way to the mock
Siegel-Narain theta function, which gives insights in the way
wall-crossing is captured by $\CN=2$ BPS partition functions for compact
Calabi-Yau 3-folds. An intriguing property of the corrected partition
function is that it is continuous as a function of the K\"ahler moduli
$t$, which is reminiscent of earlier discussions \cite{Gaiotto:2008cd,
  Joyce:2006pf}.   

The outline of this paper is as follows. Section \ref{sec:sugra}
reviews briefly the relevant aspects of $\mathcal{N}=2$
supergravity. Section \ref{sec:pfunction} describes the BPS-states of
interest and the expected properties of their partition
function. Section \ref{sec:infradius} is the heart of the paper, it
describes the walls and the partition functions capturing 
wall-crossing. Section \ref{sec:conclusion} finishes with
discussions and suggestions for further research. The appendix defines
two mock Siegel-Narain theta functions and gives some of their
properties.  

\section{BPS-states in $\mathcal{N}=2$ supergravity}
\label{sec:sugra}\setcounter{equation}{0}
If IIA string theory is compactified on a compact Calabi-Yau 3-fold $X$, one
obtains $\mathcal{N}=2$ supergravity as the low energy theory in the
non-compact dimensions. The most essential part of the field content
for this article are the $b_2+1$ vector
multiplets, which each contain a $U(1)$ gauge field $F^A_{\mu\nu}$ and
complex scalar $X^A$, $A=1,\dots ,b_2+1$ (with $b_2$ the second Betti number of $X$). The
gauge fields lead to a vector of conserved charges $\Gamma=(P^0, P^a, Q_a,
Q_0)^\mathrm{T}$, $a=1\dots b_2$, which take value in the $(2b_2+2)$-dimensional lattice $L$. The magnetic charges are denoted by $P^A$ and
electric charges by $Q_A$. The charges arise in IIA string theory as
wrapped D-branes on the even homology of $X$; the components of
$\Gamma$ represent 6-, 4-, 2- and 0-dimensional cycles. A symplectic pairing
is defined on the charge lattice 
\be
\left< \Gamma_1 , \Gamma_2 \right>=-P^0_1 Q_{0,2}+P_1\cdot
Q_2-P_2\cdot Q_1+P^0_2Q_{0,1}. \non
\ee
The symplectic inner product is thus
\be
\mathbf{I}=\left(\begin{array}{cccc} & & & -1 \\
& & \mathbf{1} & \\
& -\mathbf{1} & &  \\
1 & & \end{array}\right), \non
\ee
where $\mathbf{1}$ denotes a $b_2\times b_2$ unit matrix.

The scalars $X^A$ parametrize the K\"ahler moduli space of the
Calabi-Yau $X$: the complexified K\"ahler moduli are given by
$t^a=B^a+iJ^a=X^a/X^0$. Here, $B^a$ and $J^a$ are periods of the $B$-field and the K\"ahler
form respectively.\footnote{The moduli $t^a$ will sometimes be viewed as 2-forms instead of
scalars. Similarly, the charges $\Gamma$ can also be viewed as
homology cycles or their Poincar\'e dual forms.} The $B$-field takes
values in $H^2(X,\mathbb{R})$. The K\"ahler forms are restricted 
to the K\"ahler cone $C_X$, which is defined to be the space of 2-forms
such that $\int_\gamma J>0$, $\int_P J^2>0$ and $\int_X J^3>0$ for any
holomorphic curve $\gamma$ and surface $P\in X$. An
accurate Lagrangian description of supergravity requires that the volume of $X$
is parametrically larger than the Planck length, thus $J^a\to
\infty$. This article is mainly concerned with this parameter
regime. Loop and instanton corrections can here be neglected, such that the
prepotential simplifies to the cubic expression
\be
F(t)=\frac{1}{6}d_{abc}t^at^bt^c, \non
\ee
where $d_{abc}$ is the triple intersection number of 4-cycles in $X$.

The supergravity Lagrangian is invariant under the electric-magnetic
duality group, which acts on the vector multiplets and more specifically on
the electric-magnetic fields and moduli. This duality group is
essentially a gauge redundancy, which appears by working on the
universal covering space of the moduli space instead of the moduli
space itself. The group is $Sp(2b_2+2,\mathbb{Z})$: the group of
$(2b_2+2)\times (2b_2+2)$ matrices $\mathbf{K}$ which 
leave invariant $\mathbf{I}$ \cite{de Wit:1996ix}:\footnote{Note that
we use here a different notation as in e.g. \cite{de Wit:1996ix}, which
is more natural from the point of view of geometry.}
\be
\mathbf{K}^\mathrm{T} \mathbf{I} \mathbf{K}= \mathbf{I}. \non
\ee
The arguments that the group is
$Sp(2b_2+2,\mathbb{Z})$ are valid in the large volume limit. The
correct electric-magnetic duality group, which is valid for any value
of $J$, is a subgroup of this and generated by the monodromies
around singularities in the moduli space. These generators are
generically hard to determine, except for the monodromies in the
limit $J\to \infty$. They are simply the
translations\footnote{Note that the upper or lower indices might label either rows or columns in the matrix.} 
\be
\label{eq:periodicity}
\mathbf{K}(k)=\left(\begin{array}{cccc}1 &  & & \\ k^a & \mathbf{1} & & \\ {1\over 2}
  d_{abc}k^bk^c & d_{abc}k^c & \mathbf{1} & \\
{1\over 6} d_{abc}k^ck^bk^c & {1\over 2} d_{abc}k^bk^c & k^a &
1\end{array}\right), \quad k\in \mathbb{Z}^{b_2}.
\ee

In addition, an $SL(2,\mathbb{Z})$ duality group is present, which
exchanges the weak and strong coupling regime. This group acts on the
hypermultiplets, and is most manifest in the IIB description for large 
K\"ahler parameters \cite{Bohm:1999uk}.  

If time is considered as Euclidean and compactified, another
$SL(2,\mathbb{Z})$ duality group appears. This can be seen from the
M-theory viewpoint, where the total geometry is $\mathbb{R}^3\times
T^2\times X$, and $T^2$ is the product of the time and M-theory
circle \st$\times$\sm. A Kaluza-Klein reduction to $\mathbb{R}^3$,
leads to a 3-dimensional $\mathcal{N}=4$ supergravity theory. Since
the physics in $\mathbb{R}^3$ is independent of large coordinate
reparametrizations of $T^2$, it should exhibit an $SL(2,\mathbb{Z})$
duality group. The duality (complex structure) parameter is given by
\be
\label{eq:slparameter}
\tau=C_{1}+ie^{-\Phi}=C_1+i\beta/g_{\mathrm{s}},
\ee
where $C_{1}\in \mathbb{R}$ is the component of the RR-potential
1-form along the time direction. Note that this $SL(2,\mathbb{Z})$ 
exchanges \st  and \sm, which changes the physical interpretation of
the states on both sides of the duality. D2-branes become for example worldsheet
instantons and vice versa. The full BPS-spectrum should however be invariant under
these $SL(2,\mathbb{Z})$ transformations. These transformations also
transform the $B$- and $C$-fields into each other, which is easily
seen from the M-theory perspective: the $B$- and $C$-field are
reductions of the M-theory 3-form over different $2$-cycles of the
torus. The duality transformations are summarized by
\be
\tau \to \frac{a\tau+b}{c\tau +d}, \quad C\to aC + bB, \quad B\to
cC+dB, \quad J\to |c\tau+d|J.
\ee  
with $\left(\begin{array}{cc} a & b \\ c & d \end{array} \right)\in SL(2,\mathbb{Z})$.
Note that this $SL(2,\mathbb{Z})$ is not the weak-strong duality of
the 4-dimensional supergravity. But it is possible to relate this
``M-theory'' $SL(2,\mathbb{Z})$ to the $S$-duality $SL(2,\mathbb{Z})$ of
IIB, by a T-duality along the time circle \cite{Denef:2007vg}. This
transforms $C_1$ into $C_0$ and (\ref{eq:slparameter}) becomes the
familiar IIB duality parameter. The physical D4-D2-D0 branes of IIA
become D3-D1-D-1 instantons of IIB. 
Therefore, a test of the M-theory $SL(2,\mathbb{Z})$ is equivalent to
testing $S$-duality, and in the rest of the paper the M-theory
$SL(2,\mathbb{Z})$ is referred to as $S$-duality.  

The $\mathcal{N}=2$ supersymmetry algebra contains a central element,
the central charge $Z(\Gamma)\in \mathbb{C}$. The central charge of a
BPS-state is a linear function of its charge $\Gamma$ and a non-linear
function of the K\"ahler or complex structure moduli of $X$. Only the
complexified K\"ahler moduli $t^a$ appear in $Z(\Gamma)$ for the relevant
BPS-states in this article, thus $Z(\Gamma,t)$. 

The mass $M$ of supersymmetric states is determined by the supersymmetry
algebra to be $M=|Z(\Gamma,t)|$. In a theory of gravity, a sufficiently massive
BPS-state correspond to a black hole state in the non-compact
dimensions. The moduli depend generically on the spatial position
$t(\vec{x})$ in a black hole solution. Their value at the horizon is determined in terms of the
charge $\Gamma$ by the attractor mechanism \cite{Ferrara:1995ih}, whereas the value at
infinity is imposed as boundary condition. The mass $M$ is determined
by the moduli at infinity. Following sections deal with the stability
of BPS-states, which is determined by these values at infinity. Also the
$SL(2,\mathbb{Z})$ duality group is acting on the complex structure
parameter $\tau$ of $T^2$ at infinity.    

The expression for the central charge as a function of the moduli is
generically highly non-trivial. However in the limit
$J\to \infty$ it simplifies to \cite{Aspinwall:2004jr}
\be
Z(\Gamma,t)=-\int_X e^{-t}\wedge \Gamma, \non
\ee
where the moduli $t$ and the charge $\Gamma$ are viewed as forms on
$X$. Alternatively, one can write  
\be
Z(\Gamma, t)=\left(1, t^a , \textstyle{\frac{1}{2}}d_{abc}t^bt^c,
\textstyle{\frac{1}{6}}d_{abc}t^at^bt^c   \right) \,\mathbf{I}\,
\Gamma= \Pi^\mathrm{T}\,\mathbf{I}\, 
\Gamma, \non
\ee 
where we defined the vector of the periods $\Pi$. 

A very intriguing aspect of BPS-states is their
stability. The simplest example is the case with two
BPS-objects with primitive charges $\Gamma_1$ and $\Gamma_2$. Their
total mass is larger than or equal to the mass of a single BPS-object
with the same total charge: $|Z(\Gamma_1,t)|+|Z(\Gamma_2,t)|\geq 
|Z(\Gamma_1+\Gamma_2,t)|$. The equality is generically not saturated, but for
special values of the moduli $t=t_\mathrm{ms}$, the central charges can
align $Z(\Gamma_1,t_\mathrm{ms})/Z(\Gamma_2,t_\mathrm{ms})\in 
\mathbb{R}^+$, and the equality holds. These values form
a real codimension 1 subspace of the moduli space, appropriately called
the ``walls of marginal stability''. They decompose the moduli space
into chambers. BPS-states might decay or become stable, whenever the
moduli cross a wall. 

Denef \cite{Denef:2000nb} has shown how wall-crossing phenomena are
manifested in supergravity. The equations of motions allow for
BPS-solutions with multiple black holes. The ones of interest for the
present discussion are solutions with only two black holes. The relative
distance between the two centers is given by 
\be
|x_1-x_2|=\sqrt{G_4} \left.\frac{\left<\Gamma_1,\Gamma_2\right>}{2}\frac{|Z(\Gamma_1,t)+Z(\Gamma_2,t)|}{\im
  (Z(\Gamma_1,t)\bar Z(\Gamma_2,t))}\right|_\infty ,\non
\ee
where $|_\infty$ means that the central charges are evaluated at
asymptotic infinity in the black hole solution; $G_4$ is the
4-dimensional Newton constant.\footnote{$G_4$ is the 4-dimensional Newton constant, and is
  given in terms of IIA and M-theory parameters by $G_4=g_\mathrm{s}^2
  \alpha'\frac{(\alpha')^3}{V_{\mathrm{CY}}}$ and $G_4=\ell^2_\mathrm{P}\frac{\ell_\mathrm{P}}{2\pi
  R}\frac{\ell^6_\mathrm{P}}{V_{\mathrm{CY}}}$, respectively.} In the
limit $G_4 \to 0$, or equivalently $g_\mathrm{s}\to 0$, the distance
between the centers also approaches $0$. This is the regime, where a microscopic
analysis is typically carried out, it is the D-brane regime as opposed
to the black hole regime. 

Since distances must be positive, the
solution can only exist for   
\be
\label{eq:stabcondition}
\left<\Gamma_1,\Gamma_2 \right>\im (Z(\Gamma_1,t)\bar Z(\Gamma_2,t))>0.
\ee
Importantly, $|x_1-x_2|$ depends on the moduli: if $t$ approaches a
wall of marginal stability
\be
\label{eq:wall}
\im (Z(\Gamma_1,t)\bar Z(\Gamma_2,t))=0,
\ee
$|x_1-x_2|\to \infty$ and the 2-center solution decays. An
implication of the mechanism for stability in supergravity is that
single center black holes cannot decay into BPS-configurations with
multiple constituents. If the moduli are chosen at the attractor point at infinity, and
are thus constant throughout the black hole solution, 2-center
solutions cannot exist. Moreover, the moduli flow in a 2-center
solution from a stable chamber at infinity, to unstable chambers at the attractor
points.   

In the following, we will analyze wall-crossing between two chambers
$\mathcal{C}_\mathrm{A}$ and $\mathcal{C}_\mathrm{B}$. To avoid ambiguities, one can choose
$\Gamma_1$ and $\Gamma_2$ such that $\im (Z(\Gamma_1,t)\bar
Z(\Gamma_2,t))<0$ in $\mathcal{C}_\mathrm{B}$, which is equivalent to the convention 
in the mathematical literature, see for example \cite{Yoshioka:1994}. This means that a
stable object with charge $\Gamma$ satisfies 
\be
\frac{\im(Z(\Gamma_2,t))}{\re(Z(\Gamma_2,t))}<\frac{\im(Z(\Gamma,t))}{\re(Z(\Gamma,t))},\non
\ee
with $\left<\Gamma,\Gamma_2 \right>>0$.  

Coherent sheaves are expected to be the proper mathematical
description of D-branes in the limit $J\to \infty$
\cite{Aspinwall:2004jr}. The charge $\Gamma$ of the BPS-state is
determined by the Chern character of corresponding sheaf $\mathcal{E}$
and of the $\hat A$ genus of the Calabi-Yau \cite{Minasian:1997mm}
\be
\Gamma=\mathrm{ch}(i_!\mathcal{E})\,\sqrt{\hat A(TX)},
\ee
where $i:P\hookrightarrow X$ is the inclusion map of the divisor into
the Calabi-Yau. 

Of central interest are the degeneracies of BPS-states with charge
$\Gamma$. Most useful is actually the index  
\be
\Omega(\Gamma;t)=\frac{1}{2}\Tr_{\mathcal{H}(\Gamma;t)}\,(2J_3)^2(-1)^{2J_3},
\ee 
where $J_3$ is a generator of the rotation group
Spin(3). $\Omega(\Gamma;t)$ is a protected quantity against variations
of $g_\mathrm{s}$. The degeneracies are only constant in chambers of the moduli 
space, but jump if a wall is crossed. This is easily understood from
the mechanism for decay in supergravity: the constituents separate,
leading to a factorization of the Hilbert spaces, and consequently a
loss of the number of states. The change in the index is
\cite{Denef:2007vg}:   
\be
\label{eq:deltaomega}
\Delta \Omega(\Gamma;t_\mathrm{s}\to t_\mathrm{u})= \Omega(\Gamma;
t_\mathrm{u})-\Omega(\Gamma; t_\mathrm{s})=-(-1)^{\left<\Gamma_1,\Gamma_2
  \right>-1}\left| \left<\Gamma_1,\Gamma_2 \right> \right|
\Omega(\Gamma_1 ; t_\mathrm{ms})\, \Omega(\Gamma_2;t_\mathrm{ms}).\non
\ee
Of course, in crossing a wall towards stability one gains
states. Therefore the change of the index is in this case 
\be
\Delta \Omega(\Gamma;t_\mathrm{u}\to t_\mathrm{s})=(-1)^{\left<\Gamma_1,\Gamma_2
  \right>-1}\left| \left<\Gamma_1,\Gamma_2 \right> \right|
\Omega(\Gamma_1;t_\mathrm{ms})\, \Omega(\Gamma_2;t_\mathrm{ms}).\non
\ee
Wall-crossing occurs more generally between two chambers
$\mathcal{C}_\mathrm{A}$ and $\mathcal{C}_\mathrm{B}$. If $\Gamma_1$
and $\Gamma_2$ are chosen such that $\im
(Z(\Gamma_1,t_\mathrm{B})\bar Z(\Gamma_2,t_\mathrm{B}))<0$ in 
$\mathcal{C}_\mathrm{B}$, the change of the index between the two chambers is 
\be
\label{eq:changeOmega}
\Delta \Omega(\Gamma;t_\mathrm{A}\to t_\mathrm{B})=(-1)^{\left<\Gamma_1,\Gamma_2
  \right>}\left<\Gamma_1,\Gamma_2 \right> \Omega(\Gamma_1;t_\mathrm{ms})\, \Omega(\Gamma_2;t_\mathrm{ms}).
\ee
This is consistent with jumps of the invariants in mathematics at walls
of marginal stability. We can of course choose the points $t_\mathrm{A}$ and
$t_\mathrm{B}$ more generally and allow them to lie in the same
chamber. Then the change in the index is
\begin{eqnarray}
\label{eq:changeOmega2}
\Delta \Omega(\Gamma;t_\mathrm{A}\to t_\mathrm{B})&=&(-1)^{\left<\Gamma_1,\Gamma_2
  \right>}\left<\Gamma_1,\Gamma_2 \right>
\Omega(\Gamma_1;t)\, \Omega(\Gamma_2;t)\\
&&\times \half 
\left(\,\sgn(\im(Z(\Gamma_1;t_\mathrm{A}) \bar
Z(\Gamma_2;t_\mathrm{A})))-\sgn(\im(Z(\Gamma_1;t_\mathrm{B}) \bar
Z(\Gamma_2;t_\mathrm{B})))\,\right),\non 
\end{eqnarray}
where $\sgn(z)$ is defined as $\sgn(z)=1$ for $z> 0$, $0$ for $z=0$,
and $-1$ for $z<0$.  Note that $\Delta \Omega(\Gamma;t_\mathrm{A}\to t_\mathrm{B})$ satisfies a
cocycle relation
$[\mathrm{AC}]=[\mathrm{AB}]+[\mathrm{BC}]$.

Compatibility of the earlier described dualities with wall-crossing is
non-trivial. Consider here the compatibility of electric-magnetic
duality. As a gauge redundancy,
$Sp(2b_2+2,\mathbb{Z})$ (or the relevant subgroup) leaves invariant the central    
charge: $Z(\Gamma; t)=Z(\mathbf{K}\Gamma;\mathbf{K}t)$ ($\mathbf{K}t$ denotes the transformed
vector of moduli), and the indices: 
\be
\label{eq:degperiodicity}
\Omega(\Gamma; t)=\Omega(\mathbf{K}\Gamma;\mathbf{K}t),
\ee
for every $\Gamma\in L$. Since the walls are determined by
the central charges, and $\left<\Gamma_1,\Gamma_2 \right>=\left<\mathbf{K}\Gamma_1,\mathbf{K}\Gamma_2
\right>$, it is clear that wall-crossing does not obstruct the electric-magnetic duality group. 
Note that generically $\Omega(\Gamma;t) \neq
\Omega(\mathbf{K}\Gamma;t)$, and that no symmetry exists in
supergravity which relates these two indices. Section \ref{sec:pfunction} comes back to
this point.  

The $SL(2,\mathbb{Z})$-duality group also implies non-trivial
constraints for the degeneracies and their wall-crossing. The test of
this duality is however much more involved and the subject of Section \ref{sec:infradius},
after general aspects of D4-D2-D0 BPS-states and their partition functions are explained in the
next section.

\section{D4-D2-D0 BPS-states}
\label{sec:pfunction}\setcounter{equation}{0}
This section specializes the general considerations of the previous
section to the set of states with charge $\Gamma=(0,P,Q,Q_0)$, and
discusses the supergravity partition functions for this class of charges. These
BPS-states correspond to D4-branes wrapping a divisor in $X$, with homology
class $P\in H_4(X,\mathbb{Z})$. This class of BPS-states is
well-described in the literature, see for example
\cite{Maldacena:1997de, Minasian:1999qn,
  deBoer:2006vg,Gaiotto:2006wm}, therefore the review here will
only include the most essential parts for the discussion.

The divisor is also denoted by $P$ and taken to be very ample, which means among
others that it has non-zero positive components in all 4-dimensional
homology classes. 
The intersection form on $P$ leads to a quadratic form
$D_{ab}=d_{abc}P^c$ for magnetic charges $k\in H_4(X,\mathbb{Z})$, the
signature of $D_{ab}$ is $(1,b_2-1)$. The lattice is denoted by 
$\Lambda$. The electric charge $Q$ takes its value in
$\Lambda^*+P/2$ \cite{Freed:1999vc,Minasian:1997mm}. The conjugacy class of $Q-P/2$ in $\Lambda^*/\Lambda$  
is denoted by $\mu$. If necessary, the dependence of $D_{ab}$ on $P$ will
be made explicit, like $P\cdot J^2$, otherwise simply $J^2$ is
used. 

The real and imaginary part of the central charge $Z((P,Q,Q_0),t)$ of these
states are 
\begin{eqnarray}
\re(Z(\Gamma,t))&=& \frac{1}{2}P \cdot (J^2-B^2)+Q\cdot B - Q_0 ,\nonumber \\
\im(Z(\Gamma,t))&=& (Q-BP)\cdot J. \nonumber
\end{eqnarray}
The mass $|Z(\Gamma;t)|$ of BPS-states in the regime $P\cdot J^2\gg
|(Q-\frac{1}{2}B)\cdot B-Q_0|,\,|(Q-BP)\cdot J|$ is:
\be
\label{eq:mass}
|Z(\Gamma,t)|= \frac{1}{2}P\cdot J^2+(Q-\frac{1}{2}B P)\cdot
B-Q_0+\frac{((Q-BP)\cdot J)^2}{P\cdot J^2}+\mathcal{O}(J^{-2}).
\ee
All but the first term are homogeneous of degree 0 in $J$, and thus invariant
under rescalings. The combination $\frac{((Q-B P)\cdot J)^2}{P\cdot
  J^2}$ is positive definite: $(Q-B)^2_+$. $J$ has thus a natural
interpretation as a point of the Grassmannian which parametrizes
 1-dimensional subspaces on which $D_{ab}$ is positive
definite. It therefore determines a decomposition of $\Lambda\otimes 
\mathbb{R}$ into a 1-dimensional positive definite subspace and a $(b_2-1)$-dimensional
negative definite subspace.  

For $P^0=0$, $P\neq 0$, the transformations (\ref{eq:periodicity}) act on
the charges and moduli as
\begin{eqnarray}
&Q_0& \to Q_0 + k\cdot Q +\frac{1}{2}d_{abc}k^ak^bP^c, \non \\
&Q_a& \to Q_a + d_{abc}k^bP^c, \non \\
&t^a& \to t^a+k^a, \non 
\end{eqnarray}
with $k^a \in \Lambda$. 

As mentioned in the introduction, the microscopic explanation for the macroscopic entropy
$S_\mathrm{BH}=\pi |Z|^2$ of a single center D4-D2-D0 black hole was
given by Ref. \cite{Maldacena:1997de} using M-theory. The black hole degrees of freedom are in this
case those of an M5-brane which wraps the divisor in $X$ times the
torus $T^2$. The microscopic
counting relied on a 2-dimensional $\mathcal{N}=(4,0)$ CFT, which can
be obtained as the reduction of the M5-brane worldvolume theory to
$T^2$. The magnetic charge $P$ determines mainly the field content of
the CFT, whereas the electric charges $Q$ and $Q_0$ are charges of
states within the CFT. The BPS-indices of the single center black hole
are the Fourier coefficients of the SCFT elliptic genus 
$\mathcal{Z}_\mathrm{CFT}(\tau,C,B)$ \cite{
  deBoer:2006vg,Gaiotto:2006wm, Kraus:2006nb}.     

To test the compatibility of $S$-duality in supergravity with wall-crossing, one needs
to consider the full supergravity partition function
$\mathcal{Z}(\tau,C,t)$,\footnote{The subscript ``sugra'' used in the
  introduction will be omitted.} which captures the stability of BPS-states as
a function of $t$. Properties of $\mathcal{Z}(\tau,C,t)$ are now briefly reviewed,
tailored for the present discussion. It is defined by 
\begin{eqnarray}
&\mathcal{Z}(\tau,C,t)&=\sum_{Q_0,\, Q}\, \Tr_{\mathcal{H}(P, Q, Q_0; t)}\,\textstyle{\frac{1}{2}}(2J_3)^2 (-1)^{2J_3+P\cdot Q}
\non \\
&&\exp\left(-2\pi \tau_2|Z(\Gamma,t)| + 2\pi i \tau_1 (Q_0-Q\cdot B +B^2/2)+2\pi i C\cdot (Q-B/2)\right),\non
\end{eqnarray}
with $\tau_2=\frac{\beta}{g_{\mathrm{s}}}\in\mathbb{R}^+$,
$\tau_1=C_1 \in \mathbb{R}$, $t=B+iJ\in \Lambda \otimes
\mathbb{C}$ and $B,C\in \Lambda \otimes
\mathbb{R}$. This function sums over Hilbert spaces with fixed magnetic
charge and varying electric charges. This is in agreement with a
microcanonical ensemble for magnetic charge and a canonical ensemble
for electric charges, which is natural in the statistical physics of
BPS black holes \cite{Ooguri:2004zv}. After insertion of (\ref{eq:mass}) one finds
\begin{eqnarray}
\mathcal{Z}(\tau,C,t)&=&\exp(-\pi \tau_2 J^2)\, \sum_{Q_0,\, Q}\,\Tr_{\mathcal{H}(P, Q, Q_0; t)}\,\textstyle{\frac{1}{2}}(2J_3)^2 (-1)^{2J_3+P\cdot Q} \nonumber \\
&&\times e\left(-\bar \tau \hat Q_{\bar 0} +  \tau (Q-B)_+^2/2+ \bar \tau
(Q-B)_-^2/2 + C\cdot (Q-B/2)\right),\nonumber 
\end{eqnarray}
with $\hat Q_{\bar 0}=Q_{\bar 0}+\frac{1}{2}Q^2$, $Q_{\bar
  0}=-Q_0$ and $e(x)=\exp(2\pi i x)$. The modular invariant prefactor $\exp(-\pi \tau_2
J^2)$ is in the following omitted. The partition function has an expansion 
\begin{eqnarray}
\label{eq:expansion}
\mathcal{Z}(\tau,C,t)&=& \sum_{Q_{0}, \, Q}\nonumber
\Omega(P,Q,Q_{0};t)\, (-1)^{P\cdot Q} \\
&&\times e\left(-\bar \tau \hat Q_{\bar 0} +  \tau (Q-B)_+^2/2+ \bar \tau
(Q-B)_-^2/2 + C\cdot (Q-B/2)\right)\nonumber. 
\end{eqnarray}
Note that the partition function depends in various ways on the K\"ahler moduli $t$:
they appear in $\Omega(P,Q,Q_{0};t)$, moreover $B$ shifts the
electric charges and $J$ determines the decomposition of the lattice
into a positive and negative definite subspace of $\Lambda \otimes
\mathbb{R}$. The sum over $Q_0$ and $Q$ is unrestricted and might at some
point invalidate the estimate used for (\ref{eq:mass}), even in the
limit $J\to \infty$. To verify that this does not invalidate the
analysis, we compute the term $\mathcal{O}(J^{-2})$. It is given by
$-2 (Q-B)_+^2\, (\hat Q_{\bar 0}-\half (Q-B)_-^2)/P\cdot J^2$. It
follows from the CFT analysis that $\hat Q_{\bar 0}$ is bounded below
for a single constituent, therefore $\hat Q_{\bar 0}-\half (Q-B)_-^2$
is as well. Moreover, the next section shows that $\hat Q_{\bar
  0}-\half (Q-B)_-^2$ is also bounded below for stable bound states of 2 constituents. If $(Q-B)_+^2\, (\hat Q_{\bar 0}-\half
(Q-B)_-^2)$ is $\mathcal{O}(J^2)$, then $|Z(\Gamma,t)|-\half P\cdot J^2$ is at least
$\mathcal{O}(J)$. Contributions to the partition function of states for which the approximations for Eq. (\ref{eq:mass}) are not
satisfied, are thus highly suppressed compared to the states for which
they are satisfied, which shows that the analysis is not invalidated. Also for any given value of the charges, one can always
increase $J$ to sufficiently large values, such that the
approximations are valid. It is very well possible however,
that not the whole partition function has a nice Fourier
expansion. 

It is well known that $\mathcal{Z}(\tau,C,t)$ contains a pole for
$\tau\to i \infty$ and its $SL(2,\mathbb{Z})$ images. It is less clear at
this point whether poles in $B$ or $C$ can appear 
in $\mathcal{Z}(\tau,C,t)$. Examples of CFT's where such poles
appear, are the characters of massless representations of the $\CN=4$ SCFT algebra
\cite{Eguchi:1987wf}, and the sigma model with the non-compact target space $H_3^+$
\cite{Gawedzki:1991yu}. The Fourier expansion of a partition function
with poles depends on the integration contour. This is how
the partition function of dyons in $\CN=4$ supergravity
\cite{Sen:2007vb} captures wall-crossing phenomena. However, the stability
condition (\ref{eq:wall}) for D4-branes on ample divisors show that no wall-crossing as 
function of $C$ is present. Moreover, the partition functions for
bound states of two constituents, derived in the next section,
are not directly suggestive for ``wall-crossing by poles''. Therefore,
in the following is assumed that no poles in $B$ or $C$ are present in
$\mathcal{Z}(\tau,C,t)$.  

The translations $\mathbf{K}(k)$ of the electric-magnetic duality
group imply a symmetry for the partition function. Using
(\ref{eq:degperiodicity}) and assuming the Fourier expansion, one
verifies easily that   
\be
\mathcal{Z}(\tau,C,t) \longrightarrow (-1)^{P\cdot k}\,e(C\cdot
k/2)\,\mathcal{Z}(\tau,C,t), \non
\ee
under transformations by $\mathbf{K}(k)$. Also using
(\ref{eq:degperiodicity}) one can show a quasi-periodicity in $B$:  
\be
\mathcal{Z}(\tau,C,t+k)=(-1)^{P\cdot k}\,e(C\cdot
k/2)\,\mathcal{Z}(\tau,C,t). \non
\ee
Additionally, $\mathcal{Z}(\tau,C,t)$ satisfies a quasi-periodicity in $C$: 
\be
\label{eq:Cperiodicity}
\mathcal{Z}(\tau,C+k,t)=(-1)^{P\cdot k}\,e(-B\cdot
k/2)\,\mathcal{Z}(\tau,C,t).
\ee
These translations are large gauge transformations of $C$. A theta
function decomposition is not implied by the two  
periodicities since the Fourier coefficients $\Omega(\Gamma;t)$
explicitly depend on $B$, and generically 
$\Omega(\mathbf{K}(k)\Gamma;t)\neq\Omega(\Gamma;t)$.  

A distinguishing property of the partition function
for this class of BPS-states is that charges multiply either $\tau$ or
$\bar \tau$, in contrast to for example D2- or D6-brane partition
functions. Additionally, space-time $S$-duality suggests that the
function transforms as a modular form, such that techniques of the
theory modular forms can be usefully
applied. Refs. \cite{Gaiotto:2006wm,Gaiotto:2007cd} present some
coefficients $\Omega((0,1,Q,Q_0);t)$ for several Calabi-Yau 3-folds
with $b_2=1$. These coefficients determine the whole partition
function, and confirm modularity in a non-trivial way. However,
stability phenomena do not occur in the limit $J\to \infty$ for these
Calabi-Yau's, since $b_2=1$. The next section tests modularity, if wall-crossing
is present. 

The arguments from CFT for modularity are very
robust. Refs. \cite{deBoer:2006vg, Gaiotto:2006wm, Manschot:2008zb}
derive that the action of the generators of $SL(2,\mathbb{Z})$ on 
$\mathcal{Z}_\mathrm{CFT}(\tau,C,t)$ is given by:  
\begin{eqnarray}
\label{eq:Ztransform}
&S&:\quad \mathcal{Z}(-1/\tau,-B,C+i|\tau|J)=\tau^\frac{1}{2}\bar
  \tau^{-\frac{3}{2}}\,\varepsilon(S)\,\mathcal{Z}(\tau,C,t),  \\
&T&:\quad \mathcal{Z}(\tau+1,C+B,t)=\varepsilon(T)\,\mathcal{Z}(\tau,C,t), \non 
\end{eqnarray}
where $\varepsilon(T)=e\left(-c_2(X)\cdot P/24\right)$ and
$\varepsilon(S)=\varepsilon(T)^{-3}$ \cite{Denef:2007vg,
  Manschot:2008zb}. Here the analysis of \cite{deBoer:2006vg,
  Gaiotto:2006wm} is adapted to the supergravity point of view
following \cite{Denef:2007vg}. The next section gives evidence that
the same transformation properties continue to hold for the full supergravity
partition function.\footnote{Evidence exists that $\mathcal{Z}(\tau,C,t)$ does
  only transform as (\ref{eq:Ztransform}) under the full group 
  $SL(2,\mathbb{Z})$ if $P$ is prime. Otherwise it transforms as a modular
form of a congruence subgroup, whose level is determined by the
divisors of $P$. Consequently, the rest of the
article  assumes implicitly that $P$ is prime, although it nowhere
explicitly enters the calculations.} Note that $S$-duality is
consistent with the two periodicities mentioned above. The
periodicities and the $SL(2,\mathbb{Z})$ form together a Jacobi group
$SL(2,\mathbb{Z})\ltimes (\mathbb{Z}^{b_2})^2$.

The partition function for single constituents can be decomposed in a vector-valued modular
form and a theta function by arguments from CFT. The
indices of the CFT are indepent of the moduli at infinity:
$\Omega_\mathrm{CFT}(\Gamma;t)=\Omega_\mathrm{CFT}(\Gamma)=\Omega(\Gamma)$, and obey the ``spectral flow symmetry''
$\Omega(\Gamma)=\Omega(\mathbf{K}(k)\Gamma)$.
To see this, recall that the D2-brane charges appear in the CFT in a $U(1)^{b_{2}}$ current
algebra, which can be factored out of the total  
CFT by the Sugawara construction, which implies that the indices
satisfy $\Omega_\mathrm{CFT}(\Gamma)=\Omega_\mathrm{CFT}(\mathbf{K}(k)\Gamma)$ \cite{deBoer:2006vg,
  Gaiotto:2006wm, Kraus:2006nb}. The name ``spectral flow'' comes originally from the SCFT of superstrings. In the
current context, one could see the flow as a flow of the $B$-field. As
mentioned already after Eq. (\ref{eq:degperiodicity}), no
evidence exists that this is a symmetry of the full spectrum of 4-dimensional
supergravity. In fact, Section \ref{sec:infradius} shows 
that wall-crossing is incompatible with this symmetry at generic points of the moduli space. 

Since the spectral flow symmetry is present in the spectrum of a
single D4-D2-D0 black hole, the theta function decomposition is
reviewed here. We define the functions
\begin{eqnarray}
\label{eq:hP}
h_{P,Q-\frac{1}{2}P}(\tau)=\sum_{Q_{\bar 0}} \Omega(P,Q,Q_0)\,q^{Q_{\bar 0}+\frac{1}{2}Q^2},
\end{eqnarray}
Using that $\Omega(\Gamma)=\Omega(\mathbf{K}(k)\Gamma)$, one can show
that the invariants $\Omega(P,Q,Q_0)$ depend only on $\hat
Q_{\bar 0}$ and the conjugacy class $\mu$ of $Q\in\Lambda^*$, thus
$\Omega(P,Q,Q_0)=\Omega_\mu(\hat Q_{\bar 0})$. Therefore, $h_{P,Q-\frac{1}{2}P}(\tau)=h_{P,Q-\frac{1}{2}P+k}(\tau)$ with $k\in 
\Lambda$. This allows a decomposition of $\mathcal{Z}_\mathrm{CFT}(\tau,C,t)$  into
a vector-valued modular form $h_{P,\mu}(\tau)$ and a Siegel-Narain theta
function $\Theta_\mu(\tau,C,B)$: 
\be
\label{eq:Zdecomp}
\mathcal{Z}_\mathrm{CFT}(\tau,C,t)=\sum_{\mu \in \Lambda^*/\Lambda}
\overline{h_{P,\mu}(\tau)}\, \Theta_\mu(\tau,C,B),
\ee
with
\be
\label{eq:thetafunction}
\Theta_{\mu}(\tau,C,B)=\sum_{Q\in\Lambda+P/2+\mu}(-1)^{P\cdot Q} e\left(\tau (Q-B)_+^2/2+ \bar \tau (Q-B)_-^2/2 + C\cdot (Q-B/2)\right).
\ee
The dependence of $\Theta_{\mu}(\tau,C,B)$ on the K\"ahler moduli $J$
is not made explicit. The transformation properties of $\Theta_{\mu}(\tau,C,B)$ are   
\begin{eqnarray}
\label{eq:thetatransform}
&S&:\quad \Theta_\mu(-1/\tau,
  -B,C)=\frac{1}{\sqrt{|\Lambda^*/\Lambda|}}(-i\tau)^{b_2^+/2}(i\bar\tau)^{b_2^-/2}e(-P^2/4)
  \nonumber \\
&&\qquad\qquad \sum_\nu e(-\mu \cdot
  \nu)\Theta_\nu(\tau,C,B),\nonumber \\
&T&:\quad \Theta_\mu(\tau+1,C+B,B)=e\left((\mu+P/2)^2/2
\right)\Theta_\mu(\tau,B,C). \nonumber
\end{eqnarray}
They satisfy in addition two periodicity relations for
$B$ and $C$ with $k\in \Lambda$:
\begin{eqnarray}
&&\Theta_{\mu}(\tau,C,B+k)=(-1)^{k\cdot P}\,e(C \cdot
k/2)\,\Theta_\mu(\tau,C,B),\nonumber \\
&&\Theta_{\mu}(\tau,C+k,B)=(-1)^{k\cdot P}\,e(-B\cdot k/2)\,\Theta_\mu(\tau,C,B). \nonumber
\end{eqnarray}
All the dependence on $\tau$ and the ``explicit'' dependence on $B$,
$C$ and $J$ of $\mathcal{Z}_\mathrm{CFT}(\tau,C,t)$ is captured by the $\Theta_\mu(\tau,C,B)$. Note that the
$\Theta_\mu(\tau,C,B)$ are annihilated by $\mathcal{D}=\partial_\tau
+\frac{i}{4\pi}\partial^2_{C_+}+\frac{1}{2}B_+\cdot\partial_{C_+}-\frac{1}{4}\pi i
B_+^2$. $\mathcal{Z}(\tau,C,t)$ is also annihilated by $\mathcal{D}$, if
holomorphic anomalies in $h_{P,\mu}(\tau)$ are ignored; these are known to
arise in similar partition functions for 4-dimensional gauge theory \cite{Vafa:1994tf}.

The transformation properties
of $\Theta_\mu(\tau,C,B)$ imply that $h_{P,\mu}(\tau)$ transforms as a
vector-valued modular form: 
\begin{eqnarray}
\label{eq:htransform}
&S&: \quad
  h_{P,\mu}(-1/\tau)=-\frac{1}{\sqrt{|\Lambda^*/\Lambda|}}(-i\tau)^{-b_2/2-1}\varepsilon(S)^*e\left(-P^2/4
  \right) \non \\
&& \qquad \qquad \qquad \times \sum_{\delta\in \Lambda^*/\Lambda} e(-\delta\cdot \mu)
h_{P,\delta}(\tau), \non \\
&T&: \quad h_{P,\mu}(\tau+1)=\varepsilon(T)^*e\left((\mu + P/2)^2/2
\right) h_{P,\mu}(\tau). \non
\end{eqnarray}
From the asymptotic growth of these Fourier coefficients follows the
black hole entropy $S_\mathrm{BH}=\pi\sqrt{\frac{2}{3}(P^3+c_2(X)\cdot
  P)\hat Q_{\bar 0}}\,$ for $\hat Q_{\bar 0}\gg P^3+c_2(X)\cdot P$. 

\section{Wall-crossing in the large volume limit}
\label{sec:infradius}\setcounter{equation}{0}
As explained in Section \ref{sec:pfunction}, the partition function is
expected to exhibit the modular symmetry and electric-magnetic duality
in the large volume limit $J\to \infty$. This section constructs the
contribution $\mathcal{Z}_{P_1\leftrightarrow P_2}(\tau,C,t)$ of bound
states of two primitive constituents with primitive D4-brane charges
$P_1$ and $P_2\neq \vec 0$ to $\mathcal{Z}(\tau,C,t)$, and tests its
modular properties. I take the following Ansatz for the contribution to the index of a bound state of two primitive
constituents at a point $t$ in the moduli space:
\begin{eqnarray}
\label{eq:cont12}
\Omega_{\Gamma_1\leftrightarrow\Gamma_2}(\Gamma;t)&=&\half \left(\sgn(\im(Z(\Gamma_1,t)\bar Z(\Gamma_2,t)))+\sgn(\left< \Gamma_1,\Gamma_2\right>)\right)\\
&&\times (-1)^{\left< \Gamma_1,\Gamma_2\right>-1}\left< \Gamma_1,\Gamma_2\right>\Omega(\Gamma_1)\Omega(\Gamma_2). \non
\end{eqnarray}
The first term of the first line ensures that this Ansatz reproduces
the wall-crossing formula (\ref{eq:changeOmega2}). The non-trivial part of the Ansatz is thus the term
$\sgn(\left< \Gamma_1,\Gamma_2\right>)$. This section explains that
this is also in agreement with other important physical
requirements. Based on the Ansatz, the generating function of the contribution to the
index of the bound states is determined in Eq. (\ref{eq:genfunction}). A study of the generating function leads to the following results:
\begin{itemize}
\item [-] the generating function (\ref{eq:genfunction}) is convergent,
\item [-] the generating function does not exhibit the modular
  properties of $\mathcal{Z}_{\mathrm{CFT}}(\tau,C,t)$
  (the partition function of a single center black hole with magnetic
  charge $P_1+P_2$), but it can be made so by the
  addition of a ``modular completion'' using techniques of mock
  modular forms. The ``completed'' generating function (\ref{eq:PFstability}) is proposed as the
  contribution $\mathcal{Z}_{P_1\leftrightarrow  P_2}(\tau,C,t)$ of
  2-center bound states, which is thus compatible with $S$-duality.   
\item [-] $\mathcal{Z}_{P_1\leftrightarrow
    P_2}(\tau,C,t)$ has the unexpected property that it is continuous
  as function of the moduli, which is reminiscent of earlier work on
  wall-crossing \cite{Gaiotto:2008cd,Joyce:2006pf}. The
  generating function is by construction a discontinuous function of
  the moduli.
\end{itemize}
The combination of the first and second property is essentially a
unique consequence of the Ansatz. The agreement of the Ansatz with the
supergravity picture is discussed later.

We continue now by taking a closer look at the
walls of marginal stability. Specializing Eq. (\ref{eq:wall}), gives for the walls at $J\to \infty$
(without $1/J$ corrections)    
\be
\label{eq:wallinfradius}
P_1\cdot J^2\, (Q_2-BP_2)\cdot J - P_2\cdot J^2\,(Q_1-BP_1)\cdot J = 0.
\ee
Note that this wall is independent of the D0-brane charges
$Q_{0,i}$. And so states decay at this wall, independent of their
D0-charge and of their distribution between the constituents. The
condition for stability for this class of states is
\be
P_1\cdot J^2(Q_2-BP_2)\cdot J-P_2\cdot J^2 (Q_1-BP_1)\cdot J<0, \non
\ee
if $\left< \Gamma_1,\Gamma_2\right>>0$. This stability condition is a natural generalization of slope stability for
sheaves or bundles on surfaces \cite{Donaldson:1990}, since $P\cdot J^2$ replaces
the notion of rank. It can be derived from the stability for sheaves
\cite{Huybrechts:1996}. When $1/J$ corrections are included, one finds
that actually many physical walls merge with each other in the limit
$J\to \infty$ \cite{Diaconescu:2007bf}. We define 
\be
\mathcal{I}(Q_1, Q_2;t)=\frac{P_1\cdot J^2\,(Q_2-BP_2)\cdot
J-P_2\cdot J^2\,(Q_1-BP_1)\cdot J}{\sqrt{P_1\cdot J^2\,P_2\cdot
    J^2\,P\cdot J^2}},
\ee
which is invariant under rescalings of $J$.

It is instructive to look at the symmetries of the wall
(\ref{eq:wallinfradius}). Clearly, it is invariant under the
translations $\mathbf{K}(k)$ (\ref{eq:periodicity}), if it acts both
on the charges and the moduli. However, the wall is not invariant in
general if only the charges are transformed. This is only the case for
very special situations like $P_1||P_2$. The change in the index
is therefore not consistent with the spectral flow
symmetry. Indeed, already in Section \ref{sec:sugra} we argued that   
this symmetry is not natural from the supergravity perspective. The
fact that the symmetry is broken has major implications for supergravity
partition functions, since the decomposition into a vector-valued
modular form and theta functions is not valid. 

We can now see that Eq. (\ref{eq:cont12}) is in agreement with the supergravity picture. As
mentioned before, the picture of stability in supergravity shows that only the single center solution 
exists if the moduli are chosen at the corresponding attractor point
$t(\Gamma)$. Therefore, the index should equal the CFT-index at this point:
$\Omega(\Gamma;t(\Gamma))=\Omega_\mathrm{CFT}(\Gamma)$, which is consistent with the account of black hole entropy
\cite{Maldacena:1997de}. More evidence for this idea comes from the
conjectures in Refs. \cite{deBoer:2008fk, Andriyash:2008it}, which suggest 
 a one to one correspondence between connected components of
the solution space of multi-centered asymptotic AdS$_3\times S^2$
solutions and IIA attractor flow trees starting at
$t(\Gamma)=\lim_{\lambda\to \infty}D^{-1}Q + i \lambda
P$. Note that $\mathcal{Z}(\tau,C,t)$ does not depend
on $\lambda$ in the limit $J\to \infty$. By the AdS$_3$/CFT$_2$
correspondence, this also suggests that
$\Omega(\Gamma,t(\Gamma))=\Omega_\mathrm{CFT}(\Gamma)$. If this is
correct, (\ref{eq:cont12}) should not contribute to
$\Omega(\Gamma;t(\Gamma))$. Indeed, computation of
$\mathcal{I}(Q_1,Q_2;t(\Gamma))$ gives
$\sqrt{\frac{P^3}{P_1P^2\,P_2P^2}}(P_1\cdot Q_2-P_2\cdot Q_1)$, and therefore
$\sgn(\mathcal{I}(Q_1,Q_2;t(\Gamma)))-\sgn(\CP\cdot Q )=0$, such that
there is never a contribution from bound states at the attractor point
using this Ansatz. On the other hand, bound states with two constituents for charges $\tilde
\Gamma\neq \Gamma$ might exist at $t(\Gamma)$, and consequently
$\Omega(\tilde\Gamma,t(\Gamma))\neq\Omega_{\mathrm{CFT}}(\tilde\Gamma)$. Therefore,
these considerations of BPS-configurations with two 
constituents show that generically $\mathcal{Z}(\tau,C,t)$ equals nowhere in the moduli space 
$\mathcal{Z}_{\mathrm{CFT}}(\tau,C,t)$.

A special choice of charges is $P_1=\vec 0$, i.e. $\Gamma_1=(0, 0, Q_1,
Q_{0,1})$. If one does not move the moduli outside the K\"ahler cone,
then walls for this choice can not be crossed. To see this, recall that $Q_1$
represents now the support of a coherent sheaf and must therefore
represent a holomorphically embedded D2-brane. Therefore, $Q_1\cdot
J>0$ for $J\in C_X$. The stability condition for 
$(P,Q,Q_{0,1})\to (0,Q_1,Q_{0,1})+(P,Q_2,Q_{0,2})$ is given by 
\be
\label{eq:wallinfradius2}
P\cdot Q_1\, Q_1\cdot J <0,
\ee
which is independent of the $B$-field. Eq. (\ref{eq:wallinfradius2})
may or may not be satisfied for given charges. However, because
$Q_1\cdot J$ cannot change its sign for $J\in C_X$, no walls of
marginal stability are present in the large volume
limit. It is thus consistent to consider only bound states of
constituents with non-zero D4-brane charge.

To construct the generating function, it is covenient to introduce
some notation. For constituent $i=1,2$ with
charge $\Gamma_i$, the corresponding quadratic form is denoted by
$(Q_i)^2_i$ and the conjugacy class of
$Q_i$ in $\Lambda_i^*/\Lambda_i$ is
$\mu_i$. $\left<\Gamma_1,\Gamma_2\right>$ can be written as an
  innerproduct of 2 vectors in $\Lambda_{1}\oplus
  \Lambda_2 \otimes \mathbb{R}$. Define to this end the unit vector
  $\CP=\frac{(-P_2,P_1)}{\sqrt{PP_1P_2}}\in \Lambda_{1}\oplus \Lambda_2\otimes
  \mathbb{R}$, then $(Q_1,Q_2)\cdot
  \CP=Q\cdot \CP=\left<\Gamma_1,\Gamma_2\right>/\sqrt{PP_1P_2}$. In the appendix,
  also $\mathcal{I}(Q_1, Q_2;t)$ is written as an innerproduct. 

Since the wall is independent of the D0-brane charge, the index
$\Omega(P,Q,Q_0;t)$ jumps irrespective of the D0-brane charge. For the
partition function, we only want to keep track of the magnetic charge
of the two constituents and sum over all the electric
charge. Therefore, the contribution to the index $\Omega(P,Q,Q_0;t)$
from bound states of constituents whose D4-brane charges are $P_1$ and
$P_2$ includes a sum over D0- and D2-brane charge:
\begin{eqnarray}
\Omega_{P_1\leftrightarrow P_2} (P,Q,Q_0;t)&=&\sum_{(Q_1,Q_{0,1})+(Q_2,Q_{0,2})=(Q,Q_{0})}\half \left(\sgn(\mathcal{I}(Q_1, Q_2;t))-\sgn(\left< \Gamma_1,\Gamma_2\right>)\right)\non\\
&&\times (-1)^{\left< \Gamma_1,\Gamma_2\right>}(P_1\cdot Q_2-P_2\cdot Q_1)\,\Omega(P_1,Q_1,Q_{0,1})\Omega(P_2,Q_2,Q_{0,2}). \non
\end{eqnarray}
The generating function of $\Omega_{P_1\leftrightarrow P_2} (P,Q,Q_0;t)$
analogous to (\ref{eq:hP}) is $h_{P_1 \leftrightarrow
  P_2,Q-\frac{1}{2}P}(\tau)=\sum_{Q_{0}}\Omega_{P_1\leftrightarrow P_2}(P,Q,Q_0;t)\,q^{-Q_{0}
  +\frac{1}{2}Q^2}$. This can be expressed in terms of the
vector-valued modular forms of the last section:  
\begin{eqnarray}
\label{eq:changeOmegaD4}
&&h_{P_1 \leftrightarrow P_2,Q-\frac{1}{2}P}(\tau)\,q^{-\frac{1}{2}Q^2} \nonumber \\
&&\quad=\sum_{(Q_1,Q_{0,1})+(Q_2,Q_{0,2})=(Q,Q_{0})\atop Q_0 }(-1)^{P_1\cdot Q_2-P_2\cdot Q_1}\,(P_1\cdot
Q_2-P_2\cdot Q_1)\,\Omega(\Gamma_1) \,\Omega(\Gamma_2) \nonumber \\
&&\qquad \qquad \qquad \times \,\textstyle{\frac{1}{2}}(\,\sgn(\mathcal{I}(Q_1,Q_2;t))-\sgn(\left< \Gamma_1,\Gamma_2\right>)\,)\,q^{Q_{\bar
    0,1}+Q_{\bar 0,2}} \\
&&\quad=\sum_{Q_1+Q_2=Q}\textstyle{\frac{1}{2}}(\,\sgn(\mathcal{I}(Q_1,Q_2;t))-\sgn(\left< \Gamma_1,\Gamma_2\right>)\,)\,(-1)^{P_1\cdot Q_2-P_2\cdot
  Q_1} \nonumber  \\
&& \qquad \qquad \qquad \times \,(P_1\cdot
Q_2-P_2\cdot Q_1)\, h_{P_1, \mu_1}(\tau)\,h_{P_2,\mu_2}(\tau)\,
q^{-\frac{1}{2}(Q_1)^2_1-\frac{1}{2}(Q_2)^2_2}. \nonumber  
\end{eqnarray}
Note that the spectral flow symmetry is used here to write
$h_{P_i,\mu_i}(\tau)$ instead of $h_{P_i,Q_i-P_i/2}(\tau)$. Eq. (\ref{eq:changeOmegaD4})
can be seen as a major generalization of a similar formula for
rank 2 sheaves on a rational surface \cite{Gottsche:1998}.

To obtain the full generating function, we have to multiply $\overline{h_{P_1 \leftrightarrow
  P_2,Q-\frac{1}{2}P}(\tau)}$ by 
\be
(-1)^{P\cdot Q} \, e\left(\tau (Q-B)_+^2/2 + \bar \tau (Q-B)_-^2/2 + C\cdot (Q-B/2)\right),
\ee
and sum over $Q\in \Lambda^*$. The various quadratic forms in the exponent combine to  
\begin{eqnarray}
e\left(\tau (Q-B)_{+}^2/2+ \bar \tau \left( (Q-B)_{1\oplus 2}^2-(Q-B)_{+}^2\right)/2 + C\cdot
(Q-B/2)\right), \non 
\end{eqnarray}
where $Q_{1\oplus 2}^2=(Q_1)^2_1+(Q_2)^2_2$. See the appendix for more
explanation of the notation. The term $(Q-B)_{1\oplus
  2}^2-2(Q-B)_{+}^2$, which multiplies $\pi\tau_2$ in the exponent is
not negative definite, but has signature $(1,2b_2-1)$. An unrestricted sum over
all $(Q_1,Q_2)\in \Lambda_1\oplus \Lambda_2$ is therefore clearly
divergent. However, the presence of
$\sgn(\mathcal{I}(Q_1,Q_2;t))-\,\sgn(\CP\cdot Q)$ ensures that the
function is convergent, which follows from Proposition \ref{prop:0} in
the appendix. Thus the stability
condition implies that the quadratic form is negative definite, if
evaluated for stable bound states. Performing the sum over $Q$, one
obtains the generating series: 
\be
\label{eq:genfunction}
\sum_{\mu_{1\oplus
    2}\in \Lambda_{1\oplus 2}^*/\Lambda_{1\oplus
    2}}\overline{h_{P_1,\mu_1}(\tau)}\,\overline{h_{P_2,\mu_2}(\tau)}\,\Psi_{\mu_{1\oplus 
    2}}(\tau,C,B),
\ee
 where $\Lambda_{1\oplus 2}=\Lambda_1\oplus \Lambda_2$, $\mu_{1\oplus
  2}=(\mu_1,\mu_2)\in \Lambda_{1\oplus 2}^*/\Lambda_{1\oplus 2}$  and
\begin{eqnarray}
\label{eq:Zterm}
\Psi_{\mu_{1\oplus 2}}(\tau,C,B)&=&\sum_{{Q_1\in \Lambda_1+\mu_1+P_1/2 \atop
      Q_2\in \Lambda_2+\mu_2+P_2/2}}(P_1\cdot
Q_2-P_2\cdot Q_1)\, (-1)^{P_1\cdot Q_1+P_2\cdot Q_2} \non  \\ 
&&\textstyle{\frac{1}{2}} \left(\,
\sgn(\mathcal{I}(Q_1,Q_2;t))-\sgn(\CP\cdot Q )\,\right)   \\
&&\times e\left(\tau (Q-B)_{+}^2/2+ \bar \tau \left( (Q-B)_{1\oplus
  2}^2-(Q-B)_{+}^2\right)/2+C\cdot (Q-B/2)\right), \non 
\end{eqnarray}
with $\CP=\frac{(-P_2,P_1)}{\sqrt{PP_1P_2}}\in \Lambda_{1\oplus
  2}\otimes \mathbb{R}$. 

 The test of $S$-duality is now reduced to testing modularity for
 (\ref{eq:Zterm}). Since $\Psi_{\mu_{1\oplus 2}}(\tau,C,B)$ is not a
 sum over the total lattice $\Lambda_{1\oplus 2}$, it does not have
 the nice modular properties of the familiar theta functions. However,
 Ref. \cite{Zwegers:2000} explains that a real-analytic term can be
 added to a sum over a positive definite cone in an indefinite lattice
 with signature $(n-1,1)$, such that the resulting function transforms
 as a familiar theta function. Appendix \ref{ap:indeftheta}
 applies this technique to $\Psi_{\mu_{1\oplus     2}}(\tau,C,B)$, and
 explains in detail how it can be completed to a function  $\Psi_{\mu_{1\oplus
     2}}^*(\tau,C,B)$, which transforms as a Siegel-Narain theta
 function.\footnote{Note that the Fourier    expansion
   (\ref{eq:expansion}) is thus not modular.} The essential idea of
 this procedure is to make the replacement
\be
\sgn(z)\quad \longrightarrow \quad 2\int_0^{\sqrt{2\tau_2}z}e^{-\pi u^2}du,
\ee
which interpolates monotonically and continuously between $-1$ at
$z=-\infty$ and $1$ at $z=+\infty$. It approaches $\sgn(z)$ in the
limit $\tau_2\to\infty$. To complete $\Psi_{\mu_{1\oplus
    2}}(\tau,C,B)$ to a modular function, one also needs to replace
$z\,\sgn(z)$ by an appriopriate continuous function as explained in
the appendix. Indefinite theta functions are prominent in the work on mock modular forms
 \cite{Zwegers:2000}; $\Psi_{\mu_{1\oplus 2}}(\tau,C,B)$ is therefore appropriately called a
``mock Siegel-Narain theta function''. 

By replacing $\Psi_{\mu_{1\oplus 2}}(\tau,C,B)$ with
$\Psi_{\mu_{1\oplus 2}}^*(\tau,C,B)$ in Eq. (\ref{eq:genfunction}), we
obtain our final proposal of the contribution of 2-center bound states
$\mathcal{Z}_{P_1\leftrightarrow P_2}(\tau,C,t)$ to $\mathcal{Z}(\tau,C,t)$:
\be
\label{eq:PFstability}
\mathcal{Z}_{P_1\leftrightarrow P_2}(\tau,C,t)=\sum_{\mu_{1\oplus 
    2}\in \Lambda_{1\oplus 2}^*/\Lambda_{1\oplus
    2}}\overline{h_{P_1,\mu_1}(\tau)}\,\overline{h_{P_2,\mu_2}(\tau)}\,\Psi_{\mu_{1\oplus 
    2}}^*(\tau,C,B).
\ee
From the transformation properties of the three functions follows that
$\mathcal{Z}_{P_1\leftrightarrow P_2}(\tau,C,t)$ transforms precisely
as the CFT partition function $\mathcal{Z}_{\mathrm{CFT}}(\tau,C,t)$ of the single
constituent with D4-brane charge $P_1+P_2$ (\ref{eq:Ztransform})! To see that the
weight agrees, note that the weight of $\Psi_{\mu_{1\oplus 
    2}}^*(\tau,C,B)$ is $\half(1,2b_2+1)=\half(1,2b_2-1)+(0,1)$, where $\half(1,2b_2-1)$
is due to the lattice sum and $(0,1)$ is due to the insertion of
$P_1\cdot Q_2-P_2\cdot Q_1$. Combining this with $2\cdot (0,-\half
b_2-1)$ of the vector-valued modular forms $\overline{
  h_{P_i,\mu_i}(\tau)}$, one precisely finds the weight
$(\half,-\frac{3}{2})$ for $\mathcal{Z}_{P_1\leftrightarrow
  P_2}(\tau,C,t)$. A crucial detail is the
grading by $(-1)^{P\cdot Q}$: $(-1)^{(P_1+P_2)\cdot (Q_1+Q_2)+(P_1\cdot
  Q_2-P_2\cdot Q_1)}=(-1)^{P_1\cdot Q_1+P_2\cdot Q_2}$, such that $\Psi_{\mu_{1\oplus 
    2}}^*(\tau,C,B)$ does transform conjugately to $\overline{
  h_{P_1,\mu_1}(\tau)}\,\overline{
  h_{P_2,\mu_2}(\tau)}$. Moreover, as was already mentioned
above, coexistence of convergence and modularity is essentially a
unique consequence of the Ansatz. In particular, the fact that
$\CP$ is independent of the moduli and satisfies $\CP\cdot (J,J)=\CP\cdot (B,B)=0$ is
essential. We thus observe that all factors in (\ref{eq:cont12}) combine in a neat way
such that $\mathcal{Z}_{P_1\leftrightarrow 
  P_2}(\tau,C,t)$ has the same modular properties as $\mathcal{Z}_{P_1+  P_2}(\tau,C,t)$. 

One could of course object to correcting the partition function by hand
and argue that an anomaly appeared for $S$-duality. However, the
correcting factor could also arise automatically in a more 
physical derivation, for example by perturbative contributions. It is also not so surprising that corrections to
the Fourier expansion (\ref{eq:expansion}) are necessary, since it was
derived by assuming that the charges are finite and $J\to \infty$, which is
clearly not the case everywhere in the Hilbert space. Note that a physical
derivation might lead to a slightly different modular completion of
the generating function, since one could always add a real-analytic
function with the same transformation properties. This would however not change the crucial properties we have
established.  

Besides $S$-duality, there is another very appealing aspect in favor of the correction
term. Eq. (\ref{eq:Zterm}) is not continuous as a function of
the moduli $B$ and $J$ because of the terms
$\sgn(\mathcal{I}(Q_1,Q_2;t))$. As discussed above, the 
correction term is essentially a replacement of the discontinuous
functions $\sgn(z)$ and $z\,\sgn(z)$ by real analytic functions (which approach the original expression in the limit
$|z|\to \infty$). The modular invariant partition function is
therefore continuous in $B$ and $J$. This might not be such a
coincidence as it seems at first sight. Ref. \cite{Joyce:2006pf}
proposed a continuous and holomorphic generating function for Donaldson-Thomas
invariants (or an extension thereof), which captures wall-crossing. Moreover, Ref. \cite{Gaiotto:2008cd}
describes that continuity of the metric $g$  of the target manifold of
a 3-dimensional sigma model, essentially implies the
Kontsevich-Soibelman wall-crossing formula. Continuity of
$\mathcal{Z}(\tau,C,t)$ is very intriguing from this perspective, and
it would be interesting to investigate whether it plays here an as
fundamental role as in these references.

The contribution of all 2-constituent BPS-states with primitive, ample charges 
is easily included in $\mathcal{Z}(\tau,C,t)$ by the sum $\sum_{P_1+P_2=P
  \atop \mathrm{ample,\, primitive}}\mathcal{Z}_{P_1\leftrightarrow
  P_2}(\tau,C,t)$. The above analyses gives some evidence that
modularity is also preserved if one of the charges is
not ample.

\section{Conclusion and discussion}
\label{sec:conclusion}
\setcounter{equation}{0}
The consistency of wall-crossing with $S$-duality and
electric-magnetic duality is tested by analyzing the BPS-spectrum of
D4-D2-D0 branes on a compact Calabi-Yau 3-fold $X$. The stability of
composite BPS-states with two primitive constituents is considered, in
the large volume limit of the K\"ahler moduli space. The consistency
of electric-magnetic duality with wall-crossing follows rather
straightforwardly from the structure of the walls and the primitive
wall-crossing formula. From the equations for the walls in the 
moduli space can also be seen that wall-crossing is not compatible
with the spectral flow symmetry, which appears in the microscopic
description of a single D4-D2-D0 object by a CFT \cite{Maldacena:1997de}.
$S$-duality is tested by the construction of a partition function
(\ref{eq:PFstability}) for two constituents, which captures the
changes of the spectrum if walls of marginal stability are
crossed. The essential building block is a ``mock Siegel-Narain theta
function'', which might be of independent mathematical interest. The
stability condition and the BPS-degeneracies combine in a very intricate
way in order to preserve modularity, which is a confirmation of
$S$-duality.  

The results of this paper are applicable to various problems,
for example those related to entropy enigmas \cite{Denef:2007vg}. With
these are meant BPS-configurations with multiple constituents, whose
number of degeneracies is larger than the number of degeneracies of a
single constituent with the same charge. Originally, the common
thought was that wall-crossing would only have a subleading effect on the
degeneracies. Ref. \cite{Andriyash:2008it} has shown that enigmatic
changes in the spectrum can also happen from D4-D2-D0 configurations
with 2 constituents, which are considered in this paper. The present
work shows that these enigmatic phenomena, can be captured by modular
invariant partition functions. This might proof useful in future
studies on the entropy enigma. For example Eq. (\ref{eq:PFstability})
shows that the leading entropy of two constituents (if their bound state
exists) is $\pi
\sqrt{\frac{2}{3}(P_1^3+P_2^3+c_2\cdot P)\left(Q_{\bar 0}+\frac{1}{2}(Q_1)_1^2+\frac{1}{2}(Q_2)^2_2\right)}$
extremized with respect to $Q_1$ and $Q_2$, under the constraint
$Q_1+Q_2=Q$. This should be compared with the single constituent
entropy $\pi\sqrt{\frac{2}{3}(P^3+c_2\cdot P)(Q_{\bar
    0}+\frac{1}{2}Q^2)}$. Based on these equations, one can show the
existence of enigmatic configurations, even in the regime
$\sqrt{\frac{\hat Q_{\bar 0}}{P^3}}\gg 1$, or large topological
string coupling. This shows that $\mathcal{Z}_{\mathrm{wc}}(\tau,C,t)$ is not necessarily a small
correction to $\mathcal{Z}_{\mathrm{CFT}}(\tau,C,t)$ in (\ref{eq:zcftpzwc}). A detailed analysis of the
conditions for the first entropy to be larger than the second would
be very instructive.   This raises the question of the relation of the
discussed partition functions in this paper and the
OSV-conjecture, which relates the black hole partition function and
the one of topological strings \cite{Ooguri:2004zv}. 

The D4-D2-D0 BPS-degeneracies are also related to mathematically defined
invariants. In the large volume limit, the D4-D2-D0 index
correspond to the Euler number (or a variant thereof) of the moduli
space of coherent sheaves with support on the divisor of the
Calabi-Yau. An explicit calculation of these Euler numbers is
currently not feasible, but would be magnificent. It would for example
provide a more rigorous test of modularity of the partition
functions. A more tractable possibility for future work is to replace
the index $\Omega(\Gamma;t)$ by a more refined quantity
\cite{Dimofte:2009bv} by including the spin dependence
$\Omega(\Gamma;t,y)=\Tr_{\mathcal{H}(\Gamma;t)}\, (-y)^{2J_3}$. This is not a protected quantity, but is nevertheless of interest. The corresponding
partition function might still exhibit modular properties, and
wall-crossing formulas do exist in the literature for
$\Omega(\Gamma;t,y)$  in the context of surfaces
\cite{Gottsche:1998,Yoshioka:1994} and also physics
\cite{Diaconescu:2007bf}. A generalization of Section \ref{sec:infradius}
 to include these refined invariants should therefore be possible.
 Another suggestion is to move away from the limit $J\to
\infty$ by including finite size corrections. This would also leave
the description of the BPS-states as coherent sheaves, and the 
relations with dualities probably become probably more intricate. 
   
A limitation of this work is that it considers only primitive
wall-crossing. One might continue in a similar fashion as Section
\ref{sec:infradius} to construct partition functions for
BPS-configurations with more constituents, and test the compatibility
of the semi-primitive wall-crossing formula \cite{Denef:2007vg} and
$S$-duality in this way. Much more appealing would be a closed expression
for the partition function, which does not sum over all possible
decays. Such an expression might ultimately allow for a test  
of the generic Kontsevich-Soibelman wall-crossing formula with respect to
$S$-duality. Or even explain the KS-formula in $\CN=2$
supergravity from physical considerations, as was done for $\CN=2$
field theory \cite{Gaiotto:2008cd}. Although this paper took in some
sense an opposite approach, some lessons might still be learned. 

The requirement of the dualities implies non-trivial constraints for the
indices and wall-crossing formulas. These do not seem constraining enough to deduce the
KS-formula. For example, the appearance of mock modular forms instead
of normal modular forms was a priori unknown. This can of course
be seen as an anomaly for $S$-duality. On the other hand,
it is really pretty close to modularity, and the functions can be made
modular by a simple modification as explained in the appendix. These
modifications might appear in a more physical derivation of the
partition function in order to preserve $S$-duality. The correction
terms might be determined by a differential equation, similar to the
holomorphic anomaly equation of topological strings
\cite{Bershadsky:1993cx}. Proposition \ref{prop:5} gives the action of
$\mathcal{D}$, defined in Section \ref{sec:pfunction}, on
$\Psi_{\mu_{1\oplus 2}}^*(\tau,C,B)$. This shows that
$\mathcal{D}\mathcal{Z}_{P_1\leftrightarrow P_2}(\tau,C,t)$ includes a
term $\mathcal{Z}_{\mathrm{CFT},P_1}(\tau,C,B)\mathcal{Z}_{\mathrm{CFT},P_2}(\tau,C,B)$, which
is suggestive and reminiscent of earlier work on holomorphic anomaly equations, see for example Ref. \cite{Minahan:1998vr}. Another consequence of the correction terms
is that they make the function continuous as a function of the moduli, 
although it captures the changes of the spectrum under variations of the
moduli. This is quite intriguing, since ``continuity''  
was essential in the field theory derivation of the KS-formula in
Ref. \cite{Gaiotto:2008cd}, more precisely the continuity of the
metric of the target space of a 3-dimensional sigma
model. The appearance of a continuous partition
function in this paper suggests that continuity might be fundamental
here too. More investigation is clearly necessary to find out to what
extent continuity and the dualities can imply the generic
wall-crossing formula \cite{Kontsevich:2008} for
BPS-invariants.  Ref. \cite{Joyce:2006pf} suggested earlier a
continuous, holomorphic generating function for Donaldson-Thomas
invariants, and its discussion resembles in some respects
Ref. \cite{Gaiotto:2008cd}. However, $\mathcal{Z}_{P_1\leftrightarrow
  P_2}(\tau,C,t)$ does not seem to be holomorphic in $t$.  

Note that the way $\mathcal{Z}_{P_1\leftrightarrow P_2}(\tau,C,t)$
captures stability is quite different from how the partition function
of $\frac{1}{4}$-BPS states (or dyons) of $\CN=4$ supergravity captures
stability. That function captures wall-crossing in a very appealing
way by poles \cite{Sen:2007vb} and a proper choice of the integration
contour \cite{Cheng:2007ch} to obtain Fourier coefficients. In this
way, mock modular forms arise via meromorphic Jacobi forms \cite{Dabholkar:2009}. 

Section \ref{sec:infradius} shows that the supergravity
partition function is nowhere in moduli space equal to the CFT
partition function (except for special cases like a Calabi-Yau with
$b_2=1$). A natural question is: is the supergravity partition
function related to the partition function of a lower dimensional 
theory, just as the spectrum of a single constituent is captured by
the $\CN=(4,0)$ SCFT? Ref. \cite{deBoer:2008fk} (see also \cite{deBoer:2008ss}) proposes that such a
theory might be classically a 2-dimensional sigma model into the moduli space of
supersymmetric divisors in the Calabi-Yau, whose ``beta function does
not vanish for $Y$ \footnote{$Y$ is the
  vector of normalized 5-dimensional K\"ahler moduli, which is proportional to $J$.}
different from the attractor point and the $Y$ undergo renormalization
group flow till they reach the attractor point, an IR fixed
point. Along the flow, the constituents of M5-M5 bound states decouple from
each other; each of them has its own IR fixed point corresponding to
an AdS$_3\times S^2$.'' The structure of the partition function (\ref{eq:PFstability})
shows the decoupled constituents. It is also in agreement with the
suggestion that the theory is not a CFT, since the spectral flow
symmetry is not present. On the other hand, $\mathcal{Z}_{\mathrm{sugra}}(\tau,C,t)$
does not equal $\mathcal{Z}_\mathrm{CFT}(\tau,C,t)$ at attractor
points, which indicates that the microscopic theory (if it exists) is
not a CFT, not even at these points. A better understanding of these
issues is clearly desired. Another alternative for a microscopic
theory is quiver quantum mechanics \cite{Denef:2002ru}, which arises
in the limit $g_\mathrm{s}\to 0$, and is known to capture bound states
in 4 dimensions. A connection between this theory, the D4-D2-D0 bound
states and their partition functions might lead to interesting insights.

An intriguing implication of the proposed function is wall-crossing
as a function of the $C$-field for the BPS-states one obtains after
$S$-duality. A D4-D2-D0 BPS-state becomes a D3-D1-D-1 instantonic 
BPS-state after performing a T-duality along the time circle. This
does not yet change anything fundamental, stability of this
configuration is still captured by $B$ and $J$. However, $S$-duality
transforms such a configuration to one with instanton
D3-branes and fundamental string instantons. Moreover, $B$ and $C$ are
interchanged, which implies that the degeneracies of these BPS-states
jump as a function of $C$ and $J$. This is quite interesting since the
$C$-field is generically not considered as a stability parameter, and 
gives also evidence that $B$ and $C$ should be considered on a more equal
footing. The K-theoretic description of the $C$-fields is however very
different in nature than the description of the $B$-field.

\bigskip
\begin{center}{\bf Acknowledgements}\end{center}
I would like to thank Dieter van den Bleeken, Wu-yen Chuang, Atish
Dabholkar, Emanuel Diaconescu, Davide Gaiotto, Lothar G\"ottsche and Gregory Moore for
fruitful discussions. I owe special thanks to Gregory Moore for
his comments on the manuscript. This work is supported by the DOE under grant
DE-FG02-96ER40949.

\appendix
\section{Two mock Siegel-Narain theta functions}
\label{ap:indeftheta}\setcounter{equation}{0}
This appendix computes the transformation properties of the
Siegel-Narain mock theta function which appears in Section
\ref{sec:infradius}. The proofs are similar to those given in
\cite{Zwegers:2000}. The dependence on the
Grassmannian, which parametrizes 1-dimensional positive definite
subspaces in the lattice $\Lambda$, however complicates the discussion.
First, properties of a simpler 
mock Siegel-Narain theta function are analyzed before those of
$\Psi^{*}_{\mu_{1\oplus 2}}(\tau,C,B)$.  

Let $\Lambda$, $\Lambda_1$ and $\Lambda_2$ be three lattices with
signature $(1,b_2-1)$. The quadratic forms of the lattices are
determined by a cubic form $d_{abc}$: respectively $d_{abc}P^c$, $d_{abc}P^c_1$
and $d_{abc}P^c_2$. The vectors $P_{(i)}$ are characteristic vectors
of the lattices and positive: $P^3_{(i)}>0$. They are related by
$P=P_1+P_2$. The projection of a vector $x\in \Lambda  
\otimes \mathbb{R}$  on the positive definite subspace is determined
by the vector $J\in \Lambda \otimes\mathbb{R}$: $x_+=(x\cdot J/P\cdot J^2)J$,
$x_-=x-x_+$, and $x^2=x_+^2+x_-^2$. The positive definite combination
$x_+^2-x_-^2$ is called the majorant associated to $J$. It is sufficient for this appendix that $J$ lies in the space
\be
C_\Lambda:=\left\{ J\in \Lambda \otimes \mathbb{R}: P_{(i)}\cdot
J^2,\,P_{(i)}^2\cdot J >0,\, i=1,2 \right\}. \non
\ee
$J$ is thus positive in all three lattices.

The direct sum $\Lambda_1\oplus \Lambda_2$ is denoted by $\Lambda_{1\oplus
  2}$ with quadratic form $Q^2_{1\oplus
  2}=(Q_1)^2_1+(Q_2)_2^2$ for $Q=(Q_1,Q_2)\in
\Lambda^*_{1\oplus 2}$. Vectors in $\Lambda_{1\oplus 2}$ are sometimes given the subscript
$1\oplus 2$, and in $\Lambda_i$ the subscript $i$. For example, $P_{1\oplus 2}=P_1+P_2\in
\Lambda_{1\oplus 2}$. Similarly, $\mu_{1\oplus 2}=\mu_1+\mu_2\in
\Lambda^*_{1\oplus 2}/\Lambda_{1\oplus 2}$, and $\mu=\mu_1+\mu_2\in
\Lambda^*/\Lambda$ with $\mu_i\in
\Lambda^*_i/\Lambda_i$. With a slight abuse of notation
$Q_+^2$ denotes $((Q_1+Q_2)\cdot J)^2/P\cdot J^2$.  

Define $\mathcal{I}(Q_1,Q_2;t)$ as in the main text by
\be
\mathcal{I}(Q_1,Q_2;t)=\frac{P_1\cdot J^2 (Q_2-P_2B)\cdot
  J-P_2\cdot J^2 (Q_1-P_1B)\cdot J}{\sqrt{P_1\cdot J^2\,P_2\cdot J^2\,P\cdot J^2}}.
\ee
Define additionally the vector
\be
\mathcal{P}=\frac{(-P_2,P_1)}{\sqrt{PP_1P_2}}\in \Lambda_{1\oplus
  2}\otimes \mathbb{R},
\ee
which satisfies $\mathcal{P}^2=1$.

\begin{definition}
\label{def:1}
Let $t=B+iJ$, with $B\in \Lambda \otimes \mathbb{R}$, and $J\in
C_\Lambda$. Then $\Phi_{\mu_{1\oplus 2}}^{*}(\tau, C, B)$ is  
defined by:
\begin{eqnarray}
\label{eq:def1}
\Phi_{\mu_{1\oplus 2}}^{*}(\tau, C, B)&=&\textstyle{\frac{1}{2}}\sum_{Q\in \Lambda_{1
  \oplus 2}+\mu_{1\oplus 2}+P_{1\oplus 2}/2} \,(-1)^{P_{1}\cdot Q_1+P_{2}\cdot Q_2}\non \\ \,
&&\left(\, E\left(\mathcal{I}(Q_1,Q_2;t)
\sqrt{2\tau_2}\right)-E\left(\mathcal{P}\cdot Q
\sqrt{2\tau_2}\right)\,\right) \\
&&\,\,\times e\left(\tau (Q-B)_+^2/2+\bar \tau ((Q-B)^2_{1\oplus
  2}-(Q-B)_+^2)/2+(Q-B/2)\cdot C \right), \nonumber
\end{eqnarray}
with 
\be
\label{eq:Ez}
E(z)=2 \int^z_0 e^{-\pi u^2}du=\sgn(z)\left(1-\beta(z^2)\right), \non
\ee
where
\be
\beta(x)=\int_x^\infty u^{-\frac{1}{2}}\,e^{-\pi u}\,du,\qquad x\in
\mathbb{R}_{\geq 0}\,.\non
\ee
The moduli in the exponent of (\ref{eq:def1}) are determined by $t$. 
The ``\,*\,'' of $\Phi^*_{\mu_{1\oplus 2}}(\tau, C, B)$ distinguishes
this function from $\Phi_{\mu_{1\oplus 2}}(\tau, C, B)$,  which would
be defined by replacing $E(z)$ by $\sgn(z)$ in the definition.

\end{definition}

\begin{proposition}
\label{prop:0}
$\Phi^*_{\mu_{1\oplus 2}}(\tau, C, B)$ is convergent for $J\in C_\Lambda$
  and $B,C\in \Lambda \otimes \mathbb{R}$.
\end{proposition}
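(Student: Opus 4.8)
The plan is to show that the summand of
$\Phi^*_{\mu_{1\oplus 2}}(\tau,C,B)$ decays fast enough to guarantee
absolute convergence, by analyzing the exponential factor
$e\bigl(\tau (Q-B)_+^2/2+\bar\tau((Q-B)^2_{1\oplus
2}-(Q-B)_+^2)/2\bigr)$ together with the cancellation encoded in
$E\bigl(\mathcal{I}(Q_1,Q_2;t)\sqrt{2\tau_2}\bigr)-E\bigl(\mathcal{P}\cdot
Q\sqrt{2\tau_2}\bigr)$. The modulus of the exponential is
$\exp\bigl(-\pi\tau_2\bigl((Q-B)^2_{1\oplus 2}-2(Q-B)_+^2\bigr)\bigr)$, and
the quadratic form $x\mapsto x^2_{1\oplus 2}-2x_+^2$ has signature
$(1,2b_2-1)$, so the exponential alone is not summable; the
$E$-difference must supply the missing suppression in the single
``bad'' direction. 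First I would record the elementary bound
$|E(z)-\sgn(z)|=\beta(z^2)\le e^{-\pi z^2}$ (up to a constant) for the
tail, and more importantly the bound $|E(a)-E(b)|\le C\min\{1,|a-b|\}$,
together with $|E(a)-E(b)|\le C\,e^{-\pi\min(a^2,b^2)}$ when $a,b$ have
the same sign; the key point is that when $\mathcal{I}(Q_1,Q_2;t)$ and
$\mathcal{P}\cdot Q$ have the same sign the difference is exponentially
small in the smaller of the two squared arguments.

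Next I would set up coordinates adapted to the indefinite form. Split
$\Lambda_{1\oplus 2}\otimes\mathbb{R}$ using the two naturally
distinguished vectors: the positive direction $(J,J)/\sqrt{\cdots}$
determining $Q_+$, and $\mathcal{P}$, which by the remark in the text
satisfies $\mathcal{P}^2=1$ and $\mathcal{P}\cdot(J,J)=0$, so
$\mathcal{P}$ is orthogonal to the $Q_+$ direction and also lies in the
negative cone of $x^2_{1\oplus 2}-2x_+^2$ analysis. Writing the
quadratic form $(Q-B)^2_{1\oplus 2}-2(Q-B)_+^2$ as (negative of a
positive-definite form on the hyperplane $\mathcal{P}^\perp\cap
J^\perp$, i.e. the genuinely negative-definite part) minus the one bad
term proportional to $((Q-B)_+)^2$ with the wrong sign, one sees that
convergence reduces to controlling the sum over the bad direction, which
I would parametrize essentially by $\mathcal{I}(Q_1,Q_2;t)$ itself (a
linear functional of $Q$, up to the $B$-shift). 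Along this direction the
exponential grows, but as $\mathcal{I}$ grows the sign of
$\mathcal{I}(Q_1,Q_2;t)$ is eventually constant, and once it agrees with
$\sgn(\mathcal{P}\cdot Q)=\sgn(\langle\Gamma_1,\Gamma_2\rangle/\sqrt{PP_1P_2})$
for all but finitely many lattice points in any fixed
``negative-definite slice,'' the $E$-difference is $O(e^{-\pi c\,\mathcal{I}^2})$
or $O(e^{-\pi c\,(\mathcal{P}\cdot Q)^2})$, and one checks that
$\mathcal{I}^2$ dominates the growth of the bad exponential term. The
technical heart is therefore a two-variable estimate: for fixed value of
the negative-definite part, sum over the bad direction and show the
combined exponent is bounded above by a negative-definite quadratic
form, then sum over the remaining directions by comparison with a
Gaussian.

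Concretely, I would (i) fix the finitely many lattice cosets where
$\mathcal{I}(Q_1,Q_2;t)$ and $\mathcal{P}\cdot Q$ may have opposite
signs or be small, bound their contribution by the trivial estimate
$|E(a)-E(b)|\le 2$ times the now honestly negative-definite exponential
on those slices (here one uses that the stability region is a cone and
the ``wrong-sign'' set, after projecting out the bad direction, is
compact in the relevant sense); and (ii) on the complement, use
$|E(\mathcal{I}\sqrt{2\tau_2})-E(\mathcal{P}\cdot Q\sqrt{2\tau_2})|
\le C\exp\bigl(-\pi\,\tau_2\min(\mathcal{I}^2,(\mathcal{P}\cdot Q)^2)\bigr)$
and verify the pointwise inequality
$(Q-B)^2_{1\oplus 2}-2(Q-B)_+^2 + \min\bigl(\mathcal{I}(Q_1,Q_2;t)^2,(\mathcal{P}\cdot Q)^2\bigr)
\ge \delta\,\bigl(\text{positive-definite form in }Q\bigr) - O(1)$,
for some $\delta>0$ depending only on $J\in C_\Lambda$ and $B$. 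This
last inequality is the main obstacle: it amounts to showing that adding
the square of the stability functional to the indefinite majorant
restores positive-definiteness, i.e. that
$\min(\mathcal{I}^2,(\mathcal{P}\cdot Q)^2)$ genuinely controls the bad
direction. I expect this to follow because both $\mathcal{I}$ and
$\mathcal{P}\cdot Q$ are, up to the same $\sqrt{PP_1P_2}$ normalization
and up to the $B$-shift and $1/J$-type rescalings, linear functionals
that pair nontrivially with the bad direction (the one along which
$x^2_{1\oplus 2}-2x_+^2<0$), so their square dominates that direction's
negative contribution while being harmless on $\mathcal{P}^\perp$; the
crucial structural input, as the text emphasizes, is
$\mathcal{P}\cdot(J,J)=\mathcal{P}\cdot(B,B)=0$ and $\mathcal{P}^2=1$,
which makes $\mathcal{P}$ a unit vector transverse to the
positive-definite line and lets the completed-square argument close
uniformly on $C_\Lambda$. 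Once the pointwise quadratic inequality is in
hand, absolute convergence follows by comparison with the convergent
Gaussian sum $\sum_{Q}e^{-\pi\delta\tau_2\,|Q|^2}$, uniformly on compact
subsets of the $(\tau,C,B,J)$-domain, completing the proof.
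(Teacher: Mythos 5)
Your overall architecture is in fact the paper's: split the sum according to whether $\mathcal{I}(Q_1,Q_2;t)$ and $\CP\cdot Q$ have the same sign, use the exponential smallness of the $E$-difference when they agree, and control the disagreement region separately (the paper does this by writing $E(z)=\sgn(z)(1-\beta(z^2))$ and treating the $\beta$-pieces and the $\sgn$-difference separately). But the crux is left unproved. On the opposite-sign region one needs the indefinite form $Q_J^2:=Q^2_{1\oplus 2}-2Q_+^2$ --- which multiplies $+\pi\tau_2$ in the exponent, not $-\pi\tau_2$ as you wrote --- to be negative definite, and this is exactly what the paper's proof establishes: it introduces the vector $s(J)$ with $Q\cdot s(J)=\mathcal{I}(Q_1,Q_2;iJ)$, checks $s(J)^2=\CP^2=1$, $s(J)_+=\CP_+=0$ and, crucially, $\CP\cdot s(J)>0$ for $J\in C_\Lambda$, so that $\mathrm{span}(\CP,s(J))$ has signature $(1,1)$, and then a Gram-determinant argument shows $Q_J^2<0$ whenever $\sgn(\CP\cdot Q)\,\sgn(\mathcal{I})<0$. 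Your parenthetical ``the wrong-sign set, after projecting out the bad direction, is compact in the relevant sense'' asserts this rather than proves it, and the positivity of $\CP\cdot s(J)$ --- without which the claim is simply false --- never appears. Note also that the wrong-sign set is an open cone containing infinitely many lattice points, not ``finitely many lattice cosets''; it is precisely where the convergence question lives and cannot be treated as a finite exceptional set.

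The sign slips are not cosmetic. With the correct modulus $\exp\left(\pi\tau_2\,(Q-B)_J^2\right)$, the single dangerous direction is the one detected by $\mathcal{I}$, i.e.\ $s(J)$, not the $(J,J)$-direction: your decomposition ``negative definite on $\CP^\perp\cap J^\perp$ minus a bad term proportional to $(Q-B)_+^2$'' misidentifies the positive eigendirection (the $Q_+$-direction enters $Q_J^2$ with coefficient $-1$). Consequently your stated ``main obstacle'' inequality, $(Q-B)^2_{1\oplus 2}-2(Q-B)_+^2+\min(\mathcal{I}^2,(\CP\cdot Q)^2)\geq \delta\,(\mathrm{positive\ form})-O(1)$, is false as written: take $Q\neq 0$ orthogonal to both $\CP$ and $s(J)$ (possible once $b_2\geq 2$), so the minimum stays bounded while the left-hand side is a strictly negative quadratic. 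What your same-sign region actually needs, and what is true, is that $2\,\mathcal{I}^2-Q_J^2$ and $2(\CP\cdot Q)^2-Q_J^2$ are each positive definite, because $s(J)$ and $\CP$ are unit positive-norm vectors for $Q_J^2$ with vanishing $+$-projection; this is also what makes the paper's one-line treatment of the $\beta$-terms legitimate. So the skeleton matches the paper, but the two facts that carry the proof --- positive definiteness of $2(\cdot)^2-Q_J^2$ along the two distinguished directions, and negativity of $Q_J^2$ on the opposite-sign cone via $\CP\cdot s(J)>0$ --- are respectively misstated and missing.
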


\begin{proof}
First consider the case $B=C=0$. The term which multiplies
$\tau_2$ in the exponent, and thus determines the absolute value of the exponential is 
\be
Q_J^2:=Q_{1\oplus 2}^2-2\frac{((Q_1+Q_2)\cdot J)^2}{P\cdot J^2}=Q_{1\oplus 2}^2-2Q_+^2.
\ee
The signature of this quadratic form is
$(1,2b_2-1)$ which is problematic for convergence.  

To show convergence, note that $0\leq\beta(x)\leq e^{-\pi x}$ for all
$\mathbb{R}_{\geq 0}$ and that therefore the terms involving
$\beta(x)$ in (\ref{eq:def1}) are convergent. Consider next the terms
with
$\sgn(\CP\cdot Q)-\sgn(\mathcal{I}(Q_1,Q_2;iJ))$. There
are essentially two possibilities:
$\sgn(\CP\cdot Q)\,\sgn(\mathcal{I}(Q_1,Q_2;iJ))<0$ or
$>0$. Define the vector 
\be
s(J)=\frac{(-P_2\cdot J^2\, J, P_1\cdot J^2\,J)}{\sqrt{P_1\cdot J^2\,P_2\cdot J^2\,P\cdot J^2}}\in \Lambda_{1\oplus
  2}\otimes \mathbb{R}, \non
\ee
such that $Q\cdot s(J)=\mathcal{I}(Q_1,Q_2;iJ)$ and $s(J)^2=1$.

One can show that $\CP\cdot s(J)=\sqrt{\frac{P\cdot J^2\,
    (P_1P_2J)^2}{PP_1P_2\,P_1\cdot J^2\,P_2\cdot J^2}}>0$ and $\CP_+=s(J)_+=0$. The space
$\mathrm{span}(\CP,s(J))$ has signature $(1,1)$ in $\Lambda_{1\oplus
2}$ with inner product $Q^2_J$. Therefore
\be
\left|\begin{array}{cc}  1 & \CP\cdot s(J) \\
 \CP\cdot s(J) &  1\end{array}\right|=1-(\CP\cdot
s(J))^2< 0 \non.
\ee
Take now a vector $Q\in \Lambda_{1\oplus 2}$,
which is linearly independent of $\CP$ and $s(J)$, then
$\mathrm{span}(Q,\CP,s(J))$ is a space with signature $(1,2)$. Therefore,
\be
\left|\begin{array}{ccc} Q_J^2 & Q\cdot \CP  & Q\cdot s(J)  \\
Q\cdot \CP  & 1 & \CP\cdot s(J) \\
Q\cdot s(J)  & \CP\cdot s(J) &
1 \end{array}\right|>0. \non
\ee
From this follows directly
\be
Q^2_J+\frac{2\,\CP\cdot
  s(J)}{1-(\CP\cdot
  s(J))^2}Q\cdot \CP\,Q\cdot
s(J) < \frac{(Q\cdot \CP)^2+(Q\cdot 
  s(J))^2}{1-(\CP\cdot s(J))^2}<0. 
\ee
Therefore, if $\sgn(\CP\cdot Q)\,\sgn(\mathcal{I}(Q_1,Q_2;iJ))<0$
then $Q^2_J<0$. If $Q$ is a linear
combination of $\CP$ and $s(J)$, the determinant is zero. From this
follows that $Q^2_J=0$ only for $Q=0$, and otherwise $Q_J^2<0$. The
sum for $\sgn(Q\cdot \CP)\,\sgn(Q\cdot J)<0$ is therefore convergent.

What is left is the case $>0$. Then all the terms  vanish
identically, and therefore the whole sum is convergent. Inclusion of
$B$ and $C$ does not alter the final conclusion. 
\end{proof}

\begin{proposition}
\label{prop:1}
$\Phi_{\mu_{1\oplus 2}}^*(\tau,C,B)$ transforms under the generators $S$ and $T$
of $SL(2,\mathbb{Z})$ as:
\begin{eqnarray}
&S:&\quad \Phi_{\mu_{1\oplus 2}}^*(-1/\tau,-B,C)=-\frac{i(-i\tau)^{1/2}(i\bar
  \tau)^{b_2-1/2}}{\sqrt{|\Lambda_1^*/\Lambda_1||\Lambda_2^*/\Lambda_2|}}
  e(-P_{1\oplus 2}^2/4)\non \\
&& \qquad \qquad \qquad \qquad \qquad \sum_{\nu_{1\oplus 2} \in
    \Lambda^*_{1\oplus 2}/\Lambda_{1\oplus 2}}e(-\mu_{1\oplus 2}\cdot \nu_{1\oplus 2})\,\Phi_{\nu_{1\oplus 2}}^*(\tau,C,B),\non \\
&T:&\quad \Phi_{\mu_{1\oplus 2}}^*(\tau+1,B+C,B)=e((\mu_{1\oplus 2}+P_{1\oplus 2}/2)_{1\oplus 2}^2/2)\,\Phi_{\mu_{1\oplus 2}}^*(\tau,C,B),
  \non 
\end{eqnarray}
\end{proposition}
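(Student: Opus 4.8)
I would substitute $\tau\to\tau+1$ and $C\to B+C$ into \eq{eq:def1}. Both $\tau_2$ and the third argument $B$ are unchanged, so $\mathcal{I}(Q_1,Q_2;t)$, $\CP\cdot Q$, the kernel $E(\mathcal{I}(Q_1,Q_2;t)\sqrt{2\tau_2})-E(\CP\cdot Q\sqrt{2\tau_2})$ and the grading $(-1)^{P_1\cdot Q_1+P_2\cdot Q_2}$ are untouched. The exponential is multiplied by $e\big((Q-B)_+^2/2+((Q-B)_{1\oplus 2}^2-(Q-B)_+^2)/2+(Q-B/2)\cdot B\big)=e\big((Q-B)_{1\oplus 2}^2/2+(Q-B/2)\cdot B\big)=e(Q_{1\oplus 2}^2/2)$, where here $B$ is viewed in $\Lambda_{1\oplus 2}\otimes\mathbb{R}$. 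Writing $Q=\lambda+\mu_{1\oplus 2}+P_{1\oplus 2}/2$ with $\lambda\in\Lambda_{1\oplus 2}$, and using that $P_{1\oplus 2}$ is characteristic (so $\lambda_{1\oplus 2}^2\equiv\lambda\cdot P_{1\oplus 2}\pmod 2$) together with $\lambda\cdot\mu_{1\oplus 2}\in\mathbb{Z}$, one checks $Q_{1\oplus 2}^2\equiv(\mu_{1\oplus 2}+P_{1\oplus 2}/2)_{1\oplus 2}^2\pmod 2$, so this factor is the $Q$-independent constant $e((\mu_{1\oplus 2}+P_{1\oplus 2}/2)_{1\oplus 2}^2/2)$, which is the claimed formula.

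\textbf{The $S$-transformation (strategy).} Here I would follow the method of \cite{Zwegers:2000} for indefinite theta functions of signature $(n-1,1)$, combined with classical Poisson resummation for the Grassmannian (Siegel--Narain) part. Fix the $J$-plane and use the majorant metric $Q_J^2=Q_{1\oplus 2}^2-2Q_+^2$ of signature $(1,2b_2-1)$. As in the proof of Proposition \ref{prop:0}, the $2$-plane $V=\mathrm{span}(\CP,s(J))$ (with $s(J)\cdot Q=\mathcal{I}(Q_1,Q_2;t)$ up to the $B$-shift, $\CP^2=s(J)^2=1$, $\CP_+=s(J)_+=0$) is positive definite of signature $(1,1)$ and $V^\perp$ is negative definite of dimension $2b_2-2$. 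One shows that, thanks to the convergence established in Proposition \ref{prop:0}, the completed kernel $f(Q)=\tfrac12\big(E(\sqrt{2\tau_2}\,\CP\cdot Q)-E(\sqrt{2\tau_2}\,\mathcal{I}(Q_1,Q_2;t))\big)\,e^{2\pi i(\cdots)}$ is Schwartz on $\Lambda_{1\oplus 2}\otimes\mathbb{R}$, so Poisson summation applies to the full lattice sum. Its Fourier transform factors along $V\oplus V^\perp$: on $V^\perp$ the kernel is a plain Gaussian, producing the $(i\bar\tau)^{(2b_2-2)/2}$ part of the weight, and—using $|\Lambda_{1\oplus 2}^*/\Lambda_{1\oplus 2}|=|\Lambda_1^*/\Lambda_1|\,|\Lambda_2^*/\Lambda_2|$—the normalisation $1/\sqrt{|\Lambda_1^*/\Lambda_1||\Lambda_2^*/\Lambda_2|}$ and the dual sum $\sum_{\nu_{1\oplus 2}}e(-\mu_{1\oplus 2}\cdot\nu_{1\oplus 2})$; on $V$ the $E$-difference contributes the remaining $(-i\tau)^{1/2}(i\bar\tau)^{1/2}$ and the metaplectic $-i$. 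The grading $(-1)^{P_1\cdot Q_1+P_2\cdot Q_2}$ shifts the finite Gauss sum by the characteristic vector $P_{1\oplus 2}$, yielding $e(-P_{1\oplus 2}^2/4)$ via Milgram's formula. Since $\CP$ is moduli-independent and $s(J)$ is scale-invariant in $J$, while $\tau_2\to\tau_2/|\tau|^2$ and $\sqrt{2\tau_2}\to\sqrt{2\tau_2}/|\tau|$, the argument of $E$ rescales exactly as required for $E(\sqrt{2\tau_2}\,(\cdot))$ times the Gaussian to be a weight-one modular kernel on $V$; the swap of the second and third slots $(C,B)\to(-B,C)$ is forced by the same computation and matches the $\Theta_\mu$-transformation recalled in Section \ref{sec:pfunction}.

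\textbf{Main obstacle.} The substantive point is the indefinite factor on $V$: proving that the particular completion $E(z)=2\int_0^z e^{-\pi u^2}\,du=\sgn(z)(1-\beta(z^2))$—rather than bare $\sgn(z)$—yields \emph{exact} modularity. Concretely I would either (i) insert the integral representation of $E$ and exchange sum and integral (justified by Proposition \ref{prop:0}, which controls the $\sgn$ and the $\beta$ pieces separately, the latter since $0\le\beta(x)\le e^{-\pi x}$), Poisson-resumming under the integral as in \cite{Zwegers:2000}; or (ii) verify directly that $f$ above satisfies Vignéras's differential equation with the eigenvalue forcing weight $(1,2b_2+1)/2$, using $E'(z)=2e^{-\pi z^2}$ and the relations $\CP^2=s(J)^2=1$, $\CP_+=s(J)_+=0$. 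The rest is bookkeeping—propagating the $B$- and $C$-dependent phases through the Fourier transform and collecting all metaplectic multipliers—closely parallel to \cite{Zwegers:2000}; the genuinely new ingredient compared with that reference is the $J$-dependence of $s(J)$ and of the majorant, which however only enters through the negative-definite $V^\perp$-factor and is handled as in the standard Siegel--Narain case.
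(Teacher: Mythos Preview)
Your $T$-transformation argument is correct and matches what the paper dismisses as ``standard''. For the $S$-transformation, your overall strategy---Poisson summation on the full lattice, with the indefinite direction handled by the Zwegers completion---is also the paper's.

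Two remarks. First, a slip: you write that $V=\mathrm{span}(\CP,s(J))$ is ``positive definite of signature $(1,1)$''; it is \emph{indefinite} of signature $(1,1)$ (this is precisely the determinant inequality in Proposition~\ref{prop:0}), and that is why the $E$-completion is required. You correct yourself later (``the indefinite factor on $V$''), so this is presumably a typo, but it leaks into your weight bookkeeping: the holomorphic factor $(-i\tau)^{1/2}$ comes from the one-dimensional $Q_+$ direction, which lies in $V^\perp$ (since $\CP_+=s(J)_+=0$), not in $V$; the two-plane $V$ sits entirely in the $\bar\tau$-part of the exponent and contributes only anti-holomorphic weight.

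Second, the paper does not use either of your proposed routes (i) or (ii). Rather than exchanging sum and integral or invoking Vign\'eras, it treats each $E$-term separately along its own one-dimensional direction ($s(J)$ or $\CP$) and computes the relevant one-variable Fourier transform by a differentiation trick: differentiate the Fourier integral with respect to the dual variable $\mathcal{I}(y_1,y_2;iJ)$, rewrite this as $-\bar\tau\,\partial_{\mathcal{I}(x_1,x_2;iJ)}$ acting on the Gaussian exponent, integrate by parts, and then integrate back in $y$---the constant of integration vanishing because the transform is odd. Your alternatives would also work (Vign\'eras in particular is a clean way to package it), but the paper's route is more elementary and self-contained, avoiding both the global sum--integral exchange and the appeal to an external criterion.
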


\begin{proof}
The $S$-transformation is proven using $\sum_{k\in
  \Lambda}f(k)=\sum_{k\in \Lambda^*}\hat f(k)$, with $\hat f(k)$ the
Fourier transform of $f(k)$. Therefore, one needs to determine the
following Fourier transform: 
\begin{eqnarray}
\label{eq:fouriertransform}
&&\int_{\Lambda_{1\oplus 2} \otimes \mathbb{R}}
d^{2b_2}\!x\,E\left(\mathcal{I}(x_1,x_2;iJ)
\sqrt{2\im(-1/\tau)}\right)\non\\
&&\quad \times\exp \left(\pi i\re(-1/\bar\tau)x_{1\oplus 2}^2+\pi \im(-1/\bar \tau)(\, x_{1\oplus 2}^2-2x_+^2\,)+2\pi i\,x\cdot y  \right) \\
&&=\int_{\Lambda_{1\oplus 2} \otimes \mathbb{R}}
d^{2b_2}\!x\,E\left(\mathcal{I}(x_1,x_2;iJ)
\sqrt{2\im(-1/\tau)}\right)\,e\left(-x_+^2/2\tau-(x^2_{1\oplus 2}-x_+^2)/2\bar\tau+x\cdot y\right),\non
\end{eqnarray}
and the one with $\mathcal{I}(x_1,x_2;iJ)$ replaced by
$Q\cdot\CP$. The following concentrates on the case with
$\mathcal{I}(x_1,x_2;iJ)$, the derivation for $Q\cdot\CP$ is
completely analogous. 

Let  $Q\cdot s(J)=\mathcal{I}(Q_1,Q_2;iJ)$ as in Proposition
\ref{prop:0}, then the following definite quadratic forms can
be defined: 
\be
Q_{1\oplus 2+}^2=Q_+^2+ (Q\cdot s(J))^2, \qquad
Q^2_{1\oplus 2-}=Q^2_{1\oplus 2}-Q_{1\oplus 2+}^2, \non
\ee
since $(J,J)\cdot s(J)=0$. Using these quadratic forms, we write
\be
e\left(-x_+^2/2\tau-(x^2_{1\oplus 2}-x_+^2)/2\bar\tau
\right)=e\left(-x_+^2/2\tau-(x\cdot s(J))^2/2\bar\tau-x^2_{1\oplus 2-}/2\bar\tau \right)\non
\ee
The Fourier transform can be written in the form
\begin{eqnarray}
&&=e\left(\tau y_+^2/2 +\bar \tau \mathcal{I}(y_1,y_2;iJ)^2/2 + \bar
  \tau y_{1\oplus 2-}^2\right) \non \\
&&\,\,\times \int_{\Lambda_{1\oplus 2} \otimes \mathbb{R}}
d^{2b_2}\!x\,E\left(\mathcal{I}(x_1,x_2;iJ)
\sqrt{2\im(-1/\tau)}\right) \non \\
&&\,\,\times e\left(-(x-y\tau)_+^2/2\tau-\mathcal{I}(x_1-y_1\bar \tau,x_2-y_2\bar \tau;iJ)^2/2\bar
\tau-(x-y\bar \tau)_{1\oplus
  2-}^2/2\bar \tau\right).\non
\end{eqnarray}
To proceed, one calculates the derivative of the integral
\begin{eqnarray}
&&\frac{\partial}{\partial \mathcal{I}(y_1,y_2;iJ)} \int_{\Lambda_{1\oplus 2} \otimes \mathbb{R}}d^{2b_2}\!x\,E\left(\mathcal{I}(x_1,x_2;iJ)
\sqrt{2\im(-1/\tau)}\right) \non \\
&&\qquad \qquad \times e\left(-(x-y\tau)_+^2/2\tau-\mathcal{I}(x_1-y_1\bar \tau,x_2-y_2\bar \tau;iJ)^2/2\bar
\tau-(x-y\bar \tau)_{1\oplus
  2-}^2/2\bar \tau \right) \non \\
&&= -\frac{i(-i\tau)^{1/2}(i\bar
  \tau)^{b_2-1/2}}{\sqrt{|\Lambda_1^*/\Lambda_1||\Lambda_2^*/\Lambda_2|}}\,\frac{\partial
  E\left(\mathcal{I}(y_1,y_2;iJ)
\sqrt{2\tau_2}\right)}{\partial \mathcal{I}(y_1,y_2;iJ)}.\non
\end{eqnarray}
This is shown by replacing the derivative by $-\bar
\tau \partial_{\mathcal{I}(x_1,x_2;iJ)}$, acting only on the
exponent; and performing a partial integration. The equality is then easily established. 
Since (\ref{eq:fouriertransform}) is an odd function of $y$, the
integration constant is 0. Therefore (\ref{eq:fouriertransform}) is
equal to
\begin{eqnarray}
\label{eq:fouriertransform5}
&&-\frac{i(-i\tau)^{1/2}(i\bar \tau)^{b_2-1/2}}{\sqrt{|\Lambda_1^*/\Lambda_1||\Lambda_2^*/\Lambda_2|}}E\left(\mathcal{I}(y_1,y_2;iJ)
\sqrt{2\tau_2}\right)  \\
&&\quad\times e\left(\tau y_+^2/2 +\bar \tau \mathcal{I}(Q_1,Q_2;iJ)^2/2 + \bar
  \tau y_{1\oplus 2-}^2\right) \non 
\end{eqnarray}
Using the standard techniques to include $B$- and $C$-field dependence
etc., one finds the posed transformation law. Note that
$\CP\cdot (Q_1-BP_1,Q_2-BP_2)=\CP\cdot (Q_1,Q_2)=\CP\cdot Q$. The proof of the
$T$-transformation is standard. 
\end{proof}

\begin{proposition}
\label{prop:shadow}
Define $\mathcal{D}=\partial_\tau
+\frac{i}{4\pi}\partial^2_{C_+}+\frac{1}{2}B_+\cdot\partial_{C_+}-\frac{1}{4}\pi i
B_+^2$, then
\be
\tau_2^{1/2}\,\mathcal{D}\Phi_{\mu_{1\oplus 2}}(\tau, C, B) \non
\ee
is a modular form of weight $(2,b_2-1)$.
\end{proposition}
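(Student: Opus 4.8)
The plan is to compute $\mathcal{D}\Phi_{\mu_{1\oplus 2}}$ directly and show that the result is essentially a product of ordinary Siegel-Narain theta functions up to an explicit factor of $\tau_2^{-1/2}$, whose modular weight is known. The starting observation is that $\Phi_{\mu_{1\oplus 2}}$ differs from $\Phi^*_{\mu_{1\oplus 2}}$ only by the replacement of $E(z)$ by $\sgn(z)$; since $\mathcal{D}$ annihilates the ``naive'' theta functions $\Theta_\mu$ built from the same Gaussian (as stated in Section \ref{sec:pfunction}), and since $\sgn$ is locally constant, $\mathcal{D}$ annihilates $\Phi_{\mu_{1\oplus 2}}$ term-by-term away from the loci $\mathcal{I}(Q_1,Q_2;t)=0$ and $\mathcal{P}\cdot Q=0$. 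The nonzero contribution must therefore come entirely from the $\tau$-derivative hitting the difference $E(\cdot)-\sgn(\cdot)$ once we instead work with $\Phi^*$; more precisely, one writes $\Phi_{\mu_{1\oplus 2}} = \Phi^*_{\mu_{1\oplus 2}} - (\Phi^*-\Phi)$, notes $\Phi^*$ is (nearly) modular by Proposition \ref{prop:1}, so $\mathcal{D}\Phi^*$ has controlled weight, and then the ``shadow'' piece $\mathcal{D}(\Phi^*-\Phi)$ is what we compute.

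Concretely, I would carry out the following steps. First, write $E(z) - \sgn(z) = -\sgn(z)\beta(z^2)$ with $\beta(x) = \int_x^\infty u^{-1/2}e^{-\pi u}\,du$, so that $\partial_x \beta(x^2 \cdot 2\tau_2)$ and $\partial_\tau$ acting on $\beta\big(2\tau_2 \mathcal{I}^2\big)$ and $\beta\big(2\tau_2(\mathcal{P}\cdot Q)^2\big)$ each produce a clean Gaussian factor $e^{-2\pi\tau_2 \mathcal{I}^2}$ (resp. with $\mathcal{P}\cdot Q$) times powers of $\tau_2$. Second, observe that $\mathcal{I}(Q_1,Q_2;t)^2 = (Q-B)\cdot s(J)$ squared and $(\mathcal{P}\cdot Q)^2$ are precisely the ``extra'' positive-definite directions identified in the proof of Proposition \ref{prop:0}; adding $e^{-2\pi\tau_2(\cdot)^2}$ to the Gaussian $e(\bar\tau((Q-B)^2_{1\oplus 2}-(Q-B)^2_+)/2)$ converts the indefinite quadratic form into an honest negative-definite one along that rank-one direction, and the surviving lattice sum factorizes. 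For the $\mathcal{P}$-term this factorization is transparent: $\mathcal{P} = (-P_2,P_1)/\sqrt{PP_1P_2}$ is moduli-independent and orthogonal to both $(J,J)$ and $(B,B)$, so completing the square splits $\Lambda_{1\oplus 2}$ into the $\mathcal{P}$-line plus its orthogonal complement, and one recognizes $\Theta_{\mu_1}(\tau,C,B)\Theta_{\mu_2}(\tau,C,B)$ up to a unary theta / Gaussian integral in the $\mathcal{P}$-direction that contributes a factor $\sim \tau_2^{-1/2}$. For the $\mathcal{I}$-term the same works but with $s(J)$ in place of $\mathcal{P}$; here one must track the $J$-dependence carefully, but the net effect after the $\mathcal{D}$-differentiation is again a product of theta functions times $\tau_2^{-1/2}$.

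Third, I would assemble the weights. Each $\Theta_{\mu_i}(\tau,C,B)$ carries weight $\big(\tfrac{b_2}{2}, 0\big)$ (signature $(1,b_2-1)$ split as $(1,0)+(0,b_2-1)$, giving holomorphic weight $\tfrac12$ plus antiholomorphic weight $\tfrac{b_2-1}{2}$, i.e.\ total $(b_2^+/2, b_2^-/2)$ in the paper's conventions as in \eqref{eq:thetatransform}), so the product of two of them has weight $(1, b_2-1)$; the extra Gaussian-in-one-variable factor and the powers of $\tau_2$ produced by differentiating $\beta$ combine to exactly $\tau_2^{-1/2}$, since $\partial_\tau \beta(2\tau_2 w^2) = -w^2 \cdot (\text{const})\cdot (2\tau_2)^{-1/2}e^{-2\pi\tau_2 w^2}\cdot(\ldots)$ and the leftover $w^2 = (\mathcal{P}\cdot Q)^2$ is absorbed into the summand. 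Hence $\tau_2^{1/2}\,\mathcal{D}\Phi_{\mu_{1\oplus 2}}$ has weight $(2, b_2-1)$, matching the claim. I would also note in passing that this identifies $\mathcal{D}\Phi_{\mu_{1\oplus 2}}$ (up to the $h_{P_i,\mu_i}$ factors, in the context of $\mathcal{Z}_{P_1\leftrightarrow P_2}$) with a product of CFT building blocks, the holomorphic-anomaly structure advertised in the conclusion.

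The main obstacle I anticipate is the $\mathcal{I}$-term: unlike $\mathcal{P}$, the vector $s(J)$ is moduli-dependent, so the $\mathcal{D}$ operator — which contains $\partial_\tau$ and $\partial_{C_+}$ and is adapted to the $J$-splitting — does not commute cleanly past it, and one must verify that the $C_+$-derivatives and the $B_+^2$ term conspire with $\partial_\tau$ so that only the total-derivative-of-$\beta$ contribution survives (the $\mathcal{D}\Theta = 0$ identity being the ``completion'' statement that the naive part drops out). Checking that the cross-terms cancel and that $s(J)^2 = 1$, $s(J)_+ = 0$, $(J,J)\cdot s(J) = 0$ are exactly what is needed — essentially re-using the linear algebra from the proof of Proposition \ref{prop:0} and from Proposition \ref{prop:1} — is the place where real care is required; everything else is bookkeeping of modular weights and Gaussian integrals.
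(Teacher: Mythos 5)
Your route is the paper's route: the only observation in the paper's proof is that $\mathcal{D}$ annihilates each exponential term, so the sole surviving contribution is $\partial_\tau$ acting on the error-function factors, and these survivors reassemble into honest theta functions. However, as written your argument has concrete gaps. First, the logic of what is being differentiated needs to be fixed. The $\sgn$'s in $\Phi_{\mu_{1\oplus 2}}$ depend only on $Q$, $B$ and $J$, and $\mathcal{D}$ contains no $B$- or $J$-derivatives, so $\mathcal{D}\Phi_{\mu_{1\oplus 2}}$ vanishes identically term by term (not merely ``away from the loci $\mathcal{I}=0$, $\CP\cdot Q=0$''); the nontrivial content of the proposition --- and what the paper's proof actually computes, since it refers to $E(z\sqrt{2\tau_2})$ --- is the shadow of the completion, i.e.\ $\tau_2^{1/2}\mathcal{D}\Phi^{*}_{\mu_{1\oplus 2}}$, equivalently your $\mathcal{D}(\Phi^{*}-\Phi)$. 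You should say this explicitly rather than speak of ``the nonzero contribution'' to $\mathcal{D}\Phi$. The same remark disposes of your anticipated ``main obstacle'': the moduli-dependence of $s(J)$ is irrelevant because $\mathcal{D}$ differentiates only in $\tau$ and $C_+$, and the $E$-factors are independent of $C$, so $\mathcal{D}(E\,e)=(\partial_\tau E)\,e$ with no cross terms; the $\mathcal{I}$-term is treated identically to the $\CP$-term.

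Second, the identification of the surviving sums and the weight count are off. One has $\partial_\tau E(z\sqrt{2\tau_2})=\frac{z}{i\sqrt{2\tau_2}}\,e^{-2\pi\tau_2 z^2}$, i.e.\ a factor \emph{linear} in $z$ (not the quadratic ``$w^2$'' in your formula). Recombining $e^{-2\pi\tau_2 z^2}$ with the $\bar\tau$-Gaussian, each surviving sum is a single Siegel--Narain theta function on $\Lambda_{1\oplus 2}$, of signature $(2,2b_2-2)$, with positive $2$-plane spanned by $(J,J)$ and $s(J)$ for the $\mathcal{I}$-term, and by $(J,J)$ and $\CP$ for the $\CP$-term (the latter being the insertion-analogue of $\Theta_{\mu_{1\oplus 2}}(\tau,C,B,\CP)$ of Definition \ref{def:2}), each carrying the degree-one insertion $\mathcal{I}(Q_1,Q_2;t)$ resp.\ $\CP\cdot Q$. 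It is the $s(J)$-term, not the $\CP$-term, whose majorant coincides with the product majorant of $\Theta_{\mu_1}\Theta_{\mu_2}$, so your ``transparent factorization'' is attached to the wrong term; and in neither case is there a ``unary theta / Gaussian integral in the $\CP$-direction contributing $\tau_2^{-1/2}$'' --- the $\tau_2^{-1/2}$ comes solely from differentiating $E$. This matters for the claimed weight: your count (product thetas of weight $(1,b_2-1)$, times $\tau_2^{-1/2}$, times the prefactor $\tau_2^{1/2}$) gives $(1,b_2-1)$, not $(2,b_2-1)$. The missing unit of holomorphic weight is exactly the degree-one harmonic insertion $\mathcal{I}$ resp.\ $\CP\cdot Q$ along the positive directions, which raises the holomorphic weight by one; your write-up never credits it. With these two corrections the computation closes and coincides with the paper's proof.
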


\begin{proof}
The action of $\mathcal{D}$ on the exponents vanishes, and therefore
only the derivative to $\tau$ on the functions $E(z\sqrt{2\tau_2})$ remains. The proposition
follows easily from here.
\end{proof}

\begin{definition}
\label{def:2}
With the same input as for Definition \ref{def:1}:
\begin{eqnarray}
&&\Psi^*_{\mu_{1\oplus 2}}(\tau, C,B)=\non \\
&&\qquad \textstyle{\frac{1}{2\pi
  \sqrt{2\tau_2}}}\left(\textstyle{\sqrt{\frac{P\cdot
      J^2\,(P_1P_2J)^2}{P_1\cdot 
  J^2\,P_2\cdot
  J^2}}}\Theta_{\mu_{1}}(\tau,C,B)\,\Theta_{\mu_{2}}(\tau,C,B) - \textstyle{\sqrt{PP_1P_2}}\,\Theta_{\mu_{1\oplus
    2}}(\tau,C,B,\CP)\right)\non\\
&&\qquad +\textstyle{\frac{1}{2}}\sum_{Q\in \Lambda_{1
  \oplus 2}+\mu_{1\oplus 2}+P_{1\oplus 2}/2} \,(-1)^{P_{1}\cdot
  Q_1+P_{2}\cdot Q_2}\,(P_1\cdot Q_2-P_2\cdot Q_1) \\ 
&&\qquad \times\left(\, E\left(\mathcal{I}(Q_1,Q_2;t)
\sqrt{2\tau_2}\right)-E\left(\CP\cdot Q
\sqrt{2\tau_2}\right)\,\right)\non \\
&&\qquad \times \,e\left(  \tau (Q-B)_+^2/2+\bar \tau ((Q-B)^2_{1\oplus
  2}-(Q-B)_+^2)/2+(Q-B/2)\cdot C \right) \nonumber 
\end{eqnarray}
with $\Theta_{\mu_{i}}(\tau,C,B)$ as defined by
Eq. (\ref{eq:thetafunction}), summing over
$\Lambda_i$. $\Theta_{\mu_{1\oplus 2}}(\tau,C,B,\CP)$ is defined by 
\begin{eqnarray}
&&\Theta_{\mu_{1\oplus 2}}(\tau,C,B,\CP)=
\sum_{Q\in\Lambda_{1\oplus 2}+P_{1\oplus 2}/2+\mu_{1\oplus
    2}}(-1)^{P_{1\oplus 2}\cdot Q} \non \\ 
&&\qquad \times e\left(\tau (Q-B)_+^2/2+ \tau
(\CP\cdot Q)^2/2 + \bar \tau (Q-B)_{1\oplus 2 -}^2/2+ C\cdot (Q-B/2)
\right). \non
\end{eqnarray}
In the limit $\tau_2 \to \infty$, $\Psi^*_{\mu_{1\oplus
    2}}(\tau, C,B)$  approaches $\Psi_{\mu_{1\oplus
    2}}(\tau, C,B)$, which is defined in Eq. (\ref{eq:Zterm}). This
series is convergent because $\Phi^*_{\mu_{1\oplus 2}}(\tau, C,B)$
is convergent. 
\end{definition}

\begin{proposition}
$\Psi_{\mu_{1\oplus 2}}^*(\tau, C, B)$  transforms under the generators $S$ and $T$
of $SL(2,\mathbb{Z})$ as:
\begin{eqnarray}
&S:&\quad \Psi_{\mu_{1\oplus 2}}^*(-1/\tau,-B,C)=-\frac{(-i\tau)^{1/2}(i\bar
  \tau)^{b_2+1/2}}{\sqrt{|\Lambda_1^*/\Lambda_1||\Lambda_2^*/\Lambda_2|}}
  e(-P_{1\oplus 2}^2/4)\non \\
&& \qquad \qquad \qquad \qquad \qquad \sum_{\nu_{1\oplus 2} \in
    \Lambda_{1\oplus 2}^*/\Lambda_{1\oplus 2}}e(-\mu_{1\oplus 2}\cdot \nu_{1\oplus 2})\,\Psi_{\nu_{1\oplus 2}}^*(\tau,C,B),\non \\
&T:&\quad \Psi_{\mu_{1\oplus 2}}^*(\tau+1,B+C,B)=e((\mu_{1\oplus 2}+P_{1\oplus 2}/2)_{1\oplus 2}^2/2)\,\Psi_{\mu_{1\oplus 2}}^*(\tau,C,B),\non
\end{eqnarray}
\end{proposition}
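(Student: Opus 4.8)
The proof proceeds in close parallel to that of Proposition \ref{prop:1}; the single new feature is the linear factor $P_1\cdot Q_2-P_2\cdot Q_1=\sqrt{PP_1P_2}\,\CP\cdot Q$ inserted in the lattice sum, and it is exactly this insertion that both forces the two $\Theta$-correction terms into Definition \ref{def:2} and raises the anti-holomorphic weight of $\Psi^*$ by one relative to $\Phi^*$. The plan is as follows. First I record that, away from the correction terms, the summand of $\Psi^*_{\mu_{1\oplus2}}$ is the summand of $\Phi^*_{\mu_{1\oplus2}}$ multiplied by $\sqrt{PP_1P_2}\,\CP\cdot Q$; since $\CP^2=1$, $\CP_+=0$, $\CP\cdot(J,J)=0$ and $\CP\cdot(BP_1,BP_2)=0$ (the facts recorded around Proposition \ref{prop:0} and used in its proof), one has $\sqrt{PP_1P_2}\,\CP\cdot Q=\sqrt{PP_1P_2}\,\CP\cdot(Q-B/2)$, so the main sum is a $C$-derivative of $\Phi^*$ along $\CP$. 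From this the $T$-law is immediate: the phase $e((\mu_{1\oplus2}+P_{1\oplus2}/2)_{1\oplus2}^2/2)$ already appears for $\Phi^*$ (Proposition \ref{prop:1}), the linear insertion and the sign $(-1)^{P_1\cdot Q_1+P_2\cdot Q_2}$ are invariant under $\tau\mapsto\tau+1$, $C\mapsto C+B$ (the arguments of $E$ depend only on $\tau_2$ and on $B,J$, which $T$ fixes), and the two correction $\Theta$'s acquire the same phase through \eqref{eq:thetatransform} and its Poisson-summation analogue for $\Theta_{\mu_{1\oplus2}}(\tau,C,B,\CP)$.

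For the $S$-law I would repeat the Poisson resummation of Proposition \ref{prop:1}, now needing the Fourier transforms of $(\CP\cdot x)\,E(\mathcal{I}(x_1,x_2;iJ)\sqrt{2\,\im(-1/\tau)})$ and of $(\CP\cdot x)\,E(\CP\cdot x\,\sqrt{2\,\im(-1/\tau)})$ against the Gaussian of \eqref{eq:def1}. The partial-integration identity used there (replace the $y$-derivative along $\mathcal{I}$, resp. along $\CP$, by $-\bar\tau$ times the $x$-derivative in the same direction, and integrate by parts) still applies, carrying the linear prefactor along; the one difference is that the $x$-derivative now also hits the linear factor, so $\partial\!\left(\CP\cdot x\,E(\CP\cdot x\sqrt{2\tau_2})\right)$ produces, besides the expected shifted $\CP\cdot y\,E(\ldots)$ term, an \emph{even} Gaussian term $\propto\tau_2^{-1/2}e^{-2\pi\tau_2(\CP\cdot x)^2}$ with nonvanishing integration constant. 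Doing the Gaussian integral over the remaining $2b_2-1$ directions and Poisson-resumming over $\Lambda_{1\oplus2}^*$ converts this ``defect'' into the theta function $\Theta_{\mu_{1\oplus2}}(\tau,C,B,\CP)$ of Definition \ref{def:2} — the $\CP$-direction has moved from the $\bar\tau$-Gaussian onto the $\tau$-side — times $\tfrac{\sqrt{PP_1P_2}}{2\pi\sqrt{2\tau_2}}$ (up to sign). The identical computation applied to the $\mathcal{I}$-term, whose error-function direction is the unit vector $s(J)$ of the proof of Proposition \ref{prop:0} (with $s(J)^2=1$, $s(J)_+=0$, $\CP\cdot s(J)>0$), yields the defect $\propto\tau_2^{-1/2}e^{-2\pi\tau_2(s(J)\cdot x)^2}$; this resums to the theta on $\Lambda_{1\oplus2}$ whose positive $2$-plane is $\mathrm{span}\{(J,J),s(J)\}=\mathrm{span}\{(J,0),(0,J)\}$, and identifying that $2$-plane shows the defect is $\Theta_{\mu_1}(\tau,C,B)\,\Theta_{\mu_2}(\tau,C,B)$ up to the normalization $\sqrt{\tfrac{P\cdot J^2\,(P_1P_2J)^2}{P_1\cdot J^2\,P_2\cdot J^2}}$ — exactly the coefficients displayed in Definition \ref{def:2}.

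Finally I would assemble the pieces. The main lattice sum of $\Psi^*$ maps, by the above, to the main lattice sum of $\Psi^*_\nu$ with prefactor $(-i\tau)^{1/2}(i\bar\tau)^{b_2+1/2}$ — the extra $(0,1)$ over $\Phi^*$ coming from the linear factor lying in a negative-definite direction — plus the two defect thetas; the two correction $\Theta$-terms of Definition \ref{def:2} transform, via \eqref{eq:thetatransform} and the Poisson law for $\Theta_{\mu_{1\oplus2}}(\cdot,\CP)$, into their $\nu$-analogues plus exactly the negatives of those defects, so that all non-homogeneous pieces cancel and $\Psi^*_{\mu_{1\oplus2}}$ obeys the stated $S$-law. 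One uses throughout $\CP^2=1$, $\CP_+=0=s(J)_+$, $\CP\cdot s(J)>0$ and the Gram-determinant inequalities established in the proof of Proposition \ref{prop:0}, together with the grading identity $(-1)^{(P_1+P_2)\cdot(Q_1+Q_2)}\,(-1)^{P_1\cdot Q_2-P_2\cdot Q_1}=(-1)^{P_1\cdot Q_1+P_2\cdot Q_2}$, which is what makes $\Psi^*_{\mu_{1\oplus2}}$ transform conjugately to $\overline{h_{P_1,\mu_1}}\,\overline{h_{P_2,\mu_2}}$.

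The step I expect to be the main obstacle is the second one: correctly extracting the two Fourier-transform defect terms and identifying them, with their $J$-dependent coefficients, as the theta functions of Definition \ref{def:2}. This is the Grassmannian bookkeeping that the appendix flags as complicating the discussion — it requires tracking how $\CP$ and $s(J)$ sit relative to the positive line spanned by $(J,J)$, and how the normalizations change when the $\CP$- or $s(J)$-direction is transferred from the anti-holomorphic to the holomorphic Gaussian; by contrast the $T$-law and the final assembly are routine once those identifications are in hand.
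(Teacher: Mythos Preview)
Your proposal is correct and follows essentially the same route as the paper: compute the Fourier transform of the main lattice sum with the linear insertion $(-P_2,P_1)\cdot x$, identify the extra ``defect'' terms appearing relative to Proposition~\ref{prop:1}, and recognize them as precisely the two $\Theta$-corrections of Definition~\ref{def:2}. The paper frames the computation by decomposing $(-P_2,P_1)\cdot x$ along $s(J)$ and the negative-definite direction and writing it as a $y$-derivative acting on the Proposition~\ref{prop:1} result (then invoking the $E_2^*$ analogy), whereas you phrase it as the partial-integration $x$-derivative hitting the linear factor; these are equivalent viewpoints, and your identification of the two defect thetas via their positive $2$-planes $\mathrm{span}\{(J,J),s(J)\}=\mathrm{span}\{(J,0),(0,J)\}$ and $\mathrm{span}\{(J,J),\CP\}$ is in fact more explicit than the paper's own terse statement.
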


\begin{proof}
This is a continuation of the proof of Proposition \ref{prop:1}. The following
Fourier transform needs to be calculated: 
\begin{eqnarray}
\label{eq:fouriertransform2}
&&\int_{\Lambda_{1\oplus 2} \otimes \mathbb{R}}
d^{2b_2}\!x\,(P_1\cdot x_{2}-P_2\cdot x_{1})\,E\left(\mathcal{I}(x_1,x_2;iJ)
\sqrt{2\im(-1/\tau)}\right)  \\
&&\qquad \times e\left(-(x^2_{1\oplus 2}-x_+^2)/2\bar\tau-x_+^2/2\tau+x\cdot y\right),\non
\end{eqnarray}
and the one with $\mathcal{I}(x_1,x_2;iJ)$ replaced by $\CP\cdot
Q$. We again concentrate on the case with $\mathcal{I}(x_1,x_2;iJ)$. It is
instructive to write $P_1\cdot x_{2}-P_2\cdot x_{1}$ as
$(-P_2,P_1)\cdot x^\mathrm{T}$ with $x=(x_1,x_2)$. The inner product
$(-P_2,P_1)\cdot x_+$ with $x_+=x\cdot J J/P\cdot J^2$ vanishes. Therefore, 
\begin{eqnarray}
(-P_2,P_1)\cdot x^\mathrm{T}&=&(-P_2,P_1)\cdot
x^\mathrm{T}_-+(-P_2,P_1)\cdot s(J)^\mathrm{T} \,x\cdot s(J)  \non \\
&=&(-P_2,P_1)\cdot x^\mathrm{T}_-+\textstyle{\sqrt{\frac{P\cdot J^2\,(P_1P_2J)^2}{P_1\cdot
  J^2\,P_2\cdot J^2}}}\,\mathcal{I}(x_1,x_2;iJ), \non
\end{eqnarray}
with $s(J)\in \Lambda_{1\oplus 2}$ as in the proof of Proposition
\ref{prop:0}. This shows that the factor $P_1\cdot x_{2}-P_2\cdot x_{1}$ can be
replaced by $(2\pi i)^{-1}\left(\,(-P_2,P_1)\cdot \partial_{y_-}+\textstyle{\sqrt{\frac{P\cdot J^2\,(P_1P_2J)^2}{P_1\cdot
  J^2\,P_2\cdot J^2}}} \partial_{\mathcal{I}(y_1,y_2;iJ)}\,\right)$. Using
Proposition \ref{prop:1}, one finds that (\ref{eq:fouriertransform2}) equals 
\begin{eqnarray}
&-\frac{(-i\tau)^{1/2}(i\bar
    \tau)^{b_2+1/2}}{\sqrt{|\Lambda_1^*/\Lambda_1||\Lambda_2^*/\Lambda_2|}}
&\left[ (P_1\cdot y_{2}-P_2\cdot y_{1})\,E\left(\mathcal{I}(y_1,y_2;iJ)
\sqrt{2\tau_2}\right) e\left(\tau y^2_+/2 +\bar \tau (y_{1\oplus
  2}^2-y_+^2)/2 \right) \right. \non \\
&\qquad&\,\,+\frac{\sqrt{2\tau_2}}{\pi i \bar \tau}\textstyle{\sqrt{\frac{P\cdot J^2\,(P_1P_2J)^2}{P_1\cdot
  J^2\,P_2\cdot J^2}}}\, \left.e\left(\tau y^2_+/2 +\tau
\mathcal{I}(y_1,y_2;iJ)^2/2 +\bar \tau y_{1\oplus 2-}^2/2 \right)
\,\right]. \non
\end{eqnarray}
Clearly, this Fourier transform leads to a shift in the modular
transformation properties. This can be cured if one recalls
the transformation properties of the second Eisenstein series:
$E_2(-1/\tau)=\tau^2\,(E_2(\tau)-\frac{6i}{\pi\tau})$. A correction
term can be added to $E_2(\tau)$: $E^*_2(\tau)=E_2(\tau)-\frac{3}{\pi
  \tau_2}$ which transforms as a modular form of weight 2. This leads
precisely to the term with theta functions in the definition. This
means that the discontinuous function $z\,\sgn(z)$, which appears in
(\ref{eq:Zterm}), is replaced in $\Psi^*_{\mu_{1\oplus2}}(\tau,C,B)$ by
the real analytic function $F(z)=z\,E(z)+\frac{1}{\pi}e^{-\pi z^2}$. $F(z)$ approaches
$z\,\sgn(z)$ for $|z|\to \infty$.
\end{proof}
\begin{proposition}
\label{prop:5}
With $\mathcal{D}$ as in Proposition \ref{prop:shadow}
\begin{eqnarray}
\mathcal{D}\Psi_{\mu_{1\oplus 2}}^*(\tau, C, B)&=&  -\frac{i}{2\sqrt{2\tau_2}}\textstyle{\sqrt{\frac{P\cdot J^2\,(P_1P_2J)^2}{P_1\cdot
  J^2\,P_2\cdot J^2}}}\Upsilon_{\mu_{1\oplus 2}}(\tau,C,B) \non\\
&& +\frac{i}{4\pi  (2\tau_2)^{3/2}}\left(\Theta_{\mu_{1}}(\tau,C,B)\Theta_{\mu_{2}}(\tau,C,B)-\Theta_{\mu_{1\oplus 2}}(\tau,C,B,\CP) \right),\non
\end{eqnarray}
with 
\begin{eqnarray}
&&\Upsilon_{\mu_{1\oplus 2}}(\tau,C,B)=\sum_{Q\in\Lambda_{1\oplus 2}+P_{1\oplus 2}/2+\mu_{1\oplus
    2}}(-1)^{P_{1\oplus 2}\cdot Q}\,(-P_2, P_1)\cdot Q_-\,\mathcal{I}(Q_1,Q_2;t)  \non
\\ 
&&\qquad \times e\left(\tau (Q-B)_+^2/2+ \tau
\mathcal{I}(Q_1,Q_2;t)^2/2 + \bar \tau (Q-B)_{1\oplus 2 -}^2/2+C\cdot
(Q-B/2) \right) \non
\end{eqnarray}
\begin{proof}
The proof is straightforward. Note that $\Theta_{\mu_{i}}(\tau,C,B)$ and $\Upsilon_{\mu_{1\oplus
    2}}(\tau,C,B)$ are not mock modular forms. The weights are
respectively $(1,b_2-1)$ and $(2,b_2)$, such that the weight of $\mathcal{D}\Psi_{\mu_{1\oplus 2}}^*(\tau, C, B)$ is
$(5/2, (2b_2+1)/2)$ as expected.
\end{proof}
\end{proposition}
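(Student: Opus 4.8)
The plan is to carry out the computation the paper calls ``straightforward'' by sorting the application of $\mathcal{D}$ into the handful of places where it can act nontrivially. Write $\mathcal{D}=\partial_\tau+L_C$ with $L_C=\frac{i}{4\pi}\partial^2_{C_+}+\frac12 B_+\cdot\partial_{C_+}-\frac14\pi iB_+^2$. \emph{Step 1:} record the elementary fact behind $\mathcal{D}\Theta_\mu=0$ in Section~\ref{sec:pfunction} and Proposition~\ref{prop:shadow}, namely that $\mathcal{D}$ annihilates every Gaussian--theta exponential $e\big(\tau(Q-B)_+^2/2+\bar\tau((Q-B)^2_{1\oplus2}-(Q-B)_+^2)/2+(Q-B/2)\cdot C\big)$: the $\partial_\tau$ of the holomorphic quadratic form is cancelled by the $\partial^2_{C_+}$, $B_+\cdot\partial_{C_+}$ and $B_+^2$ terms, while the anti-holomorphic part and the components of $Q$ transverse to $(J,J)$ are untouched. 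The single exponential on which this fails is the one inside $\Theta_{\mu_{1\oplus2}}(\tau,C,B,\CP)$, which carries the extra phase $e(\tau(\CP\cdot Q)^2/2)$; there $\mathcal{D}$ leaves the residue $\pi i(\CP\cdot Q)^2$, so $\mathcal{D}\,\Theta_{\mu_{1\oplus2}}(\tau,C,B,\CP)=\pi i\sum_Q(-1)^{P_{1\oplus2}\cdot Q}(\CP\cdot Q)^2\,e(\cdots)$.

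\emph{Step 2:} consequently $\mathcal{D}\Psi^*_{\mu_{1\oplus2}}$ receives contributions only from (a) $\partial_\tau$ hitting the error functions $E(\mathcal{I}(Q_1,Q_2;t)\sqrt{2\tau_2})$ and $E(\CP\cdot Q\,\sqrt{2\tau_2})$ in the lattice sum; (b) $\partial_\tau$ hitting the prefactor $\frac{1}{2\pi\sqrt{2\tau_2}}$ of the theta correction term; and (c) $\mathcal{D}$ acting on $\Theta_{\mu_1}\Theta_{\mu_2}$ and $\Theta_{\mu_{1\oplus2}}(\tau,C,B,\CP)$ inside that correction term, which by Step~1 is the residual $\partial_\tau$ of an unmatched phase. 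Using $E'(x)=2e^{-\pi x^2}$ and $\partial_\tau\sqrt{2\tau_2}=-\frac{i}{2}(2\tau_2)^{-1/2}$ one gets $\partial_\tau E(z\sqrt{2\tau_2})=-\frac{iz}{\sqrt{2\tau_2}}e^{-2\pi\tau_2z^2}$ and $\partial_\tau\frac{1}{2\pi\sqrt{2\tau_2}}=\frac{i}{4\pi(2\tau_2)^{3/2}}$. Since $e^{-2\pi\tau_2z^2}=e(\tau z^2/2-\bar\tau z^2/2)$, multiplying the Gaussian into a theta exponential moves $z^2/2$ into the holomorphic part and out of the anti-holomorphic part; for $z=\mathcal{I}(Q_1,Q_2;t)$, the component along $s(J)$ in the notation of the proofs of Propositions~\ref{prop:0}--\ref{prop:1}, the anti-holomorphic form becomes the genuinely negative-definite $(Q-B)^2_{1\oplus2-}$ and one reproduces exactly the exponential of $\Upsilon_{\mu_{1\oplus2}}$; for $z=\CP\cdot Q$ one reproduces the exponential of $\Theta_{\mu_{1\oplus2}}(\tau,C,B,\CP)$. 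This is where the conditionally convergent ``mock'' sums turn into ordinary Siegel--Narain theta sums.

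\emph{Step 3 (the recombination):} decompose the insertion $P_1\cdot Q_2-P_2\cdot Q_1=(-P_2,P_1)\cdot Q$ along $s(J)$ and its orthogonal complement, using $(-P_2,P_1)\cdot s(J)=\sqrt{PP_1P_2}\,\CP\cdot s(J)$ from Proposition~\ref{prop:0} and $\CP\cdot(J,J)=s(J)\cdot(J,J)=0$. In the $z=\CP\cdot Q$ terms of (a), write $P_1\cdot Q_2-P_2\cdot Q_1=\sqrt{PP_1P_2}\,\CP\cdot Q$ so that the sum becomes (a multiple of) $\sum_Q(-1)^{P_{1\oplus2}\cdot Q}(\CP\cdot Q)^2e(\cdots)=\frac{1}{\pi i}\mathcal{D}\,\Theta_{\mu_{1\oplus2}}(\tau,C,B,\CP)$, which cancels the $\frac{1}{2\pi\sqrt{2\tau_2}}\mathcal{D}$-piece of (c) acting on $\Theta_{\mu_{1\oplus2}}(\tau,C,B,\CP)$; likewise the part of the $z=\mathcal{I}$ terms along $s(J)$ gives a multiple of $\sum_Q(-1)^{P_{1\oplus2}\cdot Q}\mathcal{I}^2e(\cdots)$, which via the factorisation $\Theta_{\mu_1}\Theta_{\mu_2}=\Theta_{\mu_{1\oplus2}}(\tau,C,B,s(J))$ is a multiple of $\mathcal{D}(\Theta_{\mu_1}\Theta_{\mu_2})$ and cancels the remaining piece of (c). What survives is the transverse part of the $z=\mathcal{I}$ terms, which is precisely $\Upsilon_{\mu_{1\oplus2}}$ with coefficient $-\frac{i}{2\sqrt{2\tau_2}}\sqrt{P\cdot J^2(P_1P_2J)^2/(P_1\cdot J^2\,P_2\cdot J^2)}$, together with the prefactor-derivative term from (b), which assembles to $\frac{i}{4\pi(2\tau_2)^{3/2}}\big(\Theta_{\mu_1}\Theta_{\mu_2}-\Theta_{\mu_{1\oplus2}}(\tau,C,B,\CP)\big)$; this is the asserted identity. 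The weights then follow by counting: $\Theta_{\mu_1}\Theta_{\mu_2}$ has weight $(1,b_2-1)$, and $\Upsilon_{\mu_{1\oplus2}}$ has weight $(2,b_2)$ (the underlying signature-$(2,2b_2-2)$ Siegel--Narain theta contributes $(1,b_2-1)$, the holomorphic-subspace insertion $\mathcal{I}$ adds $(1,0)$, the negative-subspace insertion $(-P_2,P_1)\cdot Q_-$ adds $(0,1)$), so with the $\tau_2^{-1/2}$ and $\tau_2^{-3/2}$ prefactors both terms carry weight $(5/2,(2b_2+1)/2)$, as required since $\mathcal{D}$ raises the weight of $\Psi^*_{\mu_{1\oplus2}}$ by $(2,0)$.

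The main obstacle --- and the reason this is tedious rather than conceptual --- is the constant-chasing in Step~3: one must keep the $B$-shifts consistent across the error functions, the theta exponentials and the linear insertions, track which inner product and which ``$+$''-projection is taken with respect to which of $\Lambda,\Lambda_1,\Lambda_2,\Lambda_{1\oplus2}$, and verify that the normalisation factors $\sqrt{PP_1P_2}$ and $\sqrt{P\cdot J^2(P_1P_2J)^2/(P_1\cdot J^2\,P_2\cdot J^2)}$ built into Definition~\ref{def:2} are exactly those that make the cancellations close and leave the clean right-hand side.
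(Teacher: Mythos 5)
Your strategy is exactly the ``straightforward'' computation the paper has in mind (the paper gives no details), and the structural skeleton is right: the only contributions come from $\partial_\tau$ hitting the error functions, $\partial_\tau$ hitting the $\tau_2^{-1/2}$ prefactor of the correction term, and the residual of $\mathcal{D}$ on the non-standard theta exponentials; the Gaussians $E'$ convert the mock sums into honest Siegel--Narain sums with the $\Upsilon$- and $\Theta_{\mu_{1\oplus2}}(\tau,C,B,\CP)$-type exponents; and the linear insertion is to be split along $s(J)$ (respectively $\CP$) and its complement, with the $s(J)$- and $\CP$-components cancelling against $\mathcal{D}$ of the correction thetas. One slip of statement: in Step 1 you claim the only exponential not annihilated by $\mathcal{D}$ is the one in $\Theta_{\mu_{1\oplus2}}(\tau,C,B,\CP)$; this is false, since on $\Lambda_{1\oplus2}$ the product $\Theta_{\mu_1}\Theta_{\mu_2}$ carries the unmatched phase $e(\tau\,\mathcal{I}(Q_1,Q_2;t)^2/2)$ as well (each factor is annihilated only by the heat operator built from its own $\Lambda_i$-projection, not by the $\mathcal{D}$ of Proposition \ref{prop:shadow}). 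Your Steps 2(c) and 3 in fact use $\mathcal{D}(\Theta_{\mu_1}\Theta_{\mu_2})\neq0$, so this is an internal inconsistency rather than a wrong idea.

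The genuine gap is the coefficient bookkeeping of Step 3, which you yourself flag as the crux but then assert rather than perform, and which as asserted contradicts your own Steps 1--2. Write $\kappa=\sqrt{P\cdot J^2\,(P_1P_2J)^2/(P_1\cdot J^2\,P_2\cdot J^2)}$ and $\lambda=\sqrt{PP_1P_2}$. By Definition \ref{def:2} the correction term is $\frac{1}{2\pi\sqrt{2\tau_2}}\bigl(\kappa\,\Theta_{\mu_1}\Theta_{\mu_2}-\lambda\,\Theta_{\mu_{1\oplus2}}(\tau,C,B,\CP)\bigr)$, so your contribution (b) is $\frac{i}{4\pi(2\tau_2)^{3/2}}\bigl(\kappa\,\Theta_{\mu_1}\Theta_{\mu_2}-\lambda\,\Theta_{\mu_{1\oplus2}}(\tau,C,B,\CP)\bigr)$: the constants $\kappa,\lambda$ cannot disappear as your final sentence of Step 3 claims. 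Likewise, in the $z=\mathcal{I}$ part of (a) the insertion decomposes as $(-P_2,P_1)\cdot Q=(-P_2,P_1)\cdot(Q-B)_-+\kappa\,\mathcal{I}(Q_1,Q_2;t)$ (using $(-P_2,P_1)\cdot(J,J)=(-P_2,P_1)\cdot(B,B)=0$ and $(-P_2,P_1)\cdot s(J)=\kappa$); the factor $\kappa$ sits entirely on the $s(J)$-component, which is precisely the piece that cancels against $\frac{\kappa}{2\pi\sqrt{2\tau_2}}\,\mathcal{D}(\Theta_{\mu_1}\Theta_{\mu_2})=\frac{i\kappa}{2\sqrt{2\tau_2}}\sum(-1)^{P_1\cdot Q_1+P_2\cdot Q_2}\mathcal{I}^2e(\cdots)$, so the surviving transverse piece is $-\frac{i}{2\sqrt{2\tau_2}}\,\Upsilon_{\mu_{1\oplus2}}$ with \emph{no} factor $\kappa$ (and with $Q_-$ understood as the $B$-shifted projection). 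Carrying your own scheme through consistently therefore gives $\mathcal{D}\Psi^*_{\mu_{1\oplus2}}=-\frac{i}{2\sqrt{2\tau_2}}\Upsilon_{\mu_{1\oplus2}}+\frac{i}{4\pi(2\tau_2)^{3/2}}\bigl(\kappa\,\Theta_{\mu_1}\Theta_{\mu_2}-\lambda\,\Theta_{\mu_{1\oplus2}}(\tau,C,B,\CP)\bigr)$, i.e.\ with $\kappa$ attached to the theta product rather than to $\Upsilon$ and with $\lambda$ retained, which is not the distribution of constants you (following the printed proposition) assert. So either the normalisation factors in the statement are misplaced or your matching is wrong; in either case the concluding claim ``this is the asserted identity'' is not established by the computation you present, and the constant-chasing you correctly identify as the only nontrivial content remains to be done.
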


\providecommand{\href}[2]{#2}\begingroup\raggedright

\end{document}